
\documentclass{daj}

\dajAUTHORdetails{%
  title = {Geometric Rank of Tensors and Subrank of Matrix Multiplication}, 
  author = {Swastik Kopparty, Guy Moshkovitz, and Jeroen Zuiddam},
  plaintextauthor = {Swastik Kopparty, Guy Moshkovitz, Jeroen Zuiddam},
    %
    %
    %
    %
    %
   %
  keywords = {algebraic complexity theory, combinatorics, matrix multiplication, tensors, subrank, analytic rank, slice rank, hypergraphs, independence number},
}   

\dajEDITORdetails{%
   year={2023},
   number={1},
   received={23 July 2020},   
   revised={6 January 2022},    
   published={27 April 2023},  
   doi={10.19086/da.73322},       
}   


\usepackage{mathtools, amsthm, amsfonts, amssymb, commath}
\usepackage[capitalize]{cleveref} 
\usepackage{accents} 

\usepackage{microtype}
\usepackage{fancyvrb}

\usepackage{enumitem}

\theoremstyle{plain}
\newtheorem{theorem}{Theorem}[section]
\newtheorem{main}{Theorem}
\newtheorem*{theorem*}{Theorem}

\newtheorem{lemma}[theorem]{Lemma}


\theoremstyle{definition}
\newtheorem{definition}[theorem]{Definition}

\newtheorem*{problem*}{Problem}

\newtheorem{remark}[theorem]{Remark}
\newtheorem{example}{Example}

\newcommand{\defin}[1]{\emph{#1}}

\renewcommand{\b}{\beta}

\newcommand{\eps}{\varepsilon}

\newcommand{\D}{\Delta}
\newcommand{\NN}{\mathbb{N}}
\newcommand{\FF}{\mathbb{F}}
\newcommand{\CC}{\mathbb{C}}
\newcommand{\ZZ}{\mathbb{Z}}
\newcommand{\KK}{\mathbb{K}}

\renewcommand{\b}[1]{\mathbf{#1}}
\newcommand{\sub}{\subseteq}

\newcommand{\F}{\mathbb{F}}
\newcommand{\K}{\mathbb{K}}

\newcommand{\N}{\mathbb{N}}

\DeclareMathOperator{\V}{\mathbf{V}}

\DeclareMathOperator*{\EE}{\mathbb{E}}

\DeclareMathOperator{\bias}{bias}

\DeclareMathOperator{\codim}{codim}
\DeclareMathOperator{\I}{I}
\DeclareMathOperator{\SR}{SR}
\DeclareMathOperator{\PR}{PR}

\DeclareMathOperator{\ZR}{ZR}

\DeclareMathOperator{\GR}{GR}
\DeclareMathOperator{\AR}{AR}
\DeclareMathOperator{\matrixrank}{rank}
\DeclareMathOperator{\corank}{corank}
\DeclareMathOperator{\Tr}{Tr}

\newcommand{\GL}{\mathrm{GL}}

\DeclarePairedDelimiter\ceil{\lceil}{\rceil}
\DeclarePairedDelimiter\floor{\lfloor}{\rfloor}

\newcommand{\degenleq}{\unlhd}
\newcommand{\id}{\mathrm{Id}}
\DeclareMathOperator{\rank}{R}
\DeclareMathOperator{\subrank}{Q}
\DeclareMathOperator{\bordersubrank}{\underline{Q}}
\DeclareMathAccent{\wtilde}{\mathord}{largesymbols}{"65}
\newcommand{\regularize}[1]{\underaccent{\wtilde}{#1}}

\DeclareMathOperator{\asympsubrank}{\underaccent{\wtilde}{Q}}

\DeclareMathOperator{\PP}{\mathbb{P}}

\begin{document}

\begin{frontmatter}[classification=text]

\title{Geometric Rank of Tensors and Subrank of Matrix Multiplication} 

\author[kopp]{Swastik Kopparty\thanks{Research supported in part by NSF grants CCF-1253886, CCF-1540634, CCF-1814409 and CCF-1412958, and BSF grant 2014359. Some of this research was done while visiting the Institute for Advanced Study.}}
\author[moshk]{Guy Moshkovitz\thanks{This work was conducted at the Institute for Advanced Study, enabled through support from the National Science Foundation under grant number CCF-1412958, and at DIMACS, enabled through support from the National Science Foundation under grant number CCF-1445755.}}
\author[zuid]{Jeroen Zuiddam\thanks{This work was conducted at the Institute for Advanced Study, supported by NSF grants DMS-1638352 and CCF-1900460, and at New York University, supported by a Simons Junior Fellowship. Any opinions, findings and conclusions or recommendations expressed in this material are those of the author and do not necessarily reflect the views of the National Science Foundation.}}

\begin{abstract}
		Motivated by problems in algebraic complexity theory (e.g., matrix multiplication) and extremal combinatorics (e.g., the cap set problem and the sunflower problem),
		we introduce the geometric rank as a new tool in the study of tensors and hypergraphs.
		We prove that the geometric rank is an upper bound on the subrank of tensors and the independence number of hypergraphs.
		We prove that the geometric rank is smaller than the slice rank of Tao, and relate geometric rank to the analytic rank of Gowers and Wolf in an asymptotic fashion.  
		As a first application, we use geometric rank to prove a tight upper bound on the (border) subrank of the matrix multiplication tensors, matching Strassen's well-known lower bound from 1987.
\end{abstract}
\end{frontmatter}





\section{Introduction}
	
	Tensors play a central role in computer science and mathematics.
	Motivated by problems in algebraic complexity theory (e.g.,~the arithmetic complexity of matrix multiplication), extremal combinatorics (e.g.,~the cap set problem and the Erd\H{o}s--Szemerédi sunflower problem) and quantum information theory (the resource theory of quantum entanglement), we introduce and study a new tensor parameter called {\em geometric~rank}.
	Like the many widely studied notions of rank for tensors (rank, subrank, border rank, border subrank, flattening rank, slice rank, analytic rank),
	geometric rank of tensors generalizes the classical rank of matrices. In this paper, we:
	\begin{itemize}
		\item prove a number of basic properties and invariances of geometric rank,
		\item develop several tools to reason about, and sometimes exactly compute, the geometric rank,
		\item show intimate connections between geometric rank and the other important notions of rank for tensors,
		\item and as a simple application of the above, we answer an old question of Strassen by showing that the (border) subrank of $m \times m$ matrix multiplication is at most $\lceil 3m^2/4 \rceil$
		(this is tight for border subrank; previously the border subrank of the matrix multiplication tensor was known to lie between $\frac{3}{4}m^2$ and $(1 - o(1))m^2$).
	\end{itemize}
	More generally, we believe that geometric rank provides an interesting new route to prove upper bounds on subrank of tensors (and hence independence numbers of hypergraphs).
	Such upper bounds are important in complexity theory in the context of matrix multiplication and barriers to matrix multiplication, and combinatorics in the context
	of specific natural hypergraphs (as in the cap set problem and the Erd\H{o}s--Szemerédi sunflower problem).
	Indeed, we prove that in these kinds of applications the geometric rank performs as good as the established slice rank. By means of the aforementioned application to matrix multiplication we see that the geometric rank can indeed perform better than slice rank and is moreover a \emph{precise} method (in the sense that its value in that case precisely coincides with the lower bound on the border subrank that was obtained by Strassen).

	
	\subsection{Geometric rank}
	We define the geometric rank of a tensor as the codimension of the (possibly reducible) algebraic variety defined by the bilinear forms given by the slices of the tensor. Here we use the standard notions of dimension and codimension of affine varieties from algebraic geometry. That is, for any tensor $T = (T_{i,j,k})_{i,j,k} \in \FF^{n_1 \times n_2 \times n_3}$ with coefficients $T_{i,j,k}$ in an algebraically closed field $\FF$ (e.g.,~the complex numbers $\CC$) and with 3-slices~$M_k = (T_{i,j,k})_{i,j} \in \FF^{n_1 \times n_2}$
	we define the geometric rank~$\GR(T)$ as
	\[
	\GR(T) = \codim \{ (x,y) \in \FF^{n_1}\times\FF^{n_2} \mid x^T\! M_1 y = \cdots = x^T\! M_{n_3} y = 0\}.
	\]
	%
	Viewing $T$ as the trilinear map $T : \FF^{n_1} \times \FF^{n_2} \times \FF^{n_3} \to \FF : (x,y,z) \mapsto \sum_{i,j,k} T_{i,j,k}\, x_i y_j z_k$, we can equivalently write the geometric rank of $T$ as
	\[
	\GR(T) = \codim \{ (x,y) \in \FF^{n_1} \times \FF^{n_2} \mid \forall z\in \FF^{n_3} : T(x,y,z) = 0\}.
	\]
	The definition of geometric rank is expressed asymmetrically in $x$, $y$ and $z$. We will see, however, that the codimensions of $\{ (x,y)\in \FF^{n_1}\times\FF^{n_2} \mid \forall z : T(x,y,z) = 0\}$, $\{ (x,z)\in \FF^{n_1}\times\FF^{n_3} \mid \forall y : T(x,y,z) = 0\}$ and $\{ (y,z)\in \FF^{n_2}\times\FF^{n_3} \mid \forall x : T(x,y,z) = 0\}$ coincide (\cref{symmetry}).
	
	The motivation for this definition is a bit hard to explain right away. We arrived at it while searching for a characteristic $0$ analogue of the analytic rank of Gowers and Wolf~\cite{MR2773103}
	(see \cref{sec:ARGR}).
	
	\begin{example}\label{ex1}
		We give an example of how to compute the geometric rank. Let $T \in \FF^{2\times 2 \times 2}$ be the tensor with 3-slices
		\[
		M_1 = \begin{pmatrix} 1 & 0\\ 0 & 0\end{pmatrix},\quad M_2 = \begin{pmatrix} 0 & 1\\1 & 0\end{pmatrix}.
		\]
		(This is sometimes called the $W$-tensor).
		One verifies that the algebraic variety $V =
		\{(x,y) \in \FF^2 \times \FF^2 \mid x_1 y_1 = 0, x_2y_1 + x_1y_2 = 0\}$
		has the three irreducible components $\{(x,y) \in \FF^2 \times \FF^2 \mid x_1 = 0, x_2 = 0\}$, $\{(x,y) \in \FF^2 \times \FF^2 \mid x_1 = 0, y_1 = 0\}$ and $\{(x,y) \in \FF^2 \times \FF^2 \mid y_1 = 0, y_2 = 0\}$.
		Each irreducible component has dimension 2
		and thus $V$ has dimension 2. Hence $\GR(T) = \codim V = 4 - 2 = 2$.
		%
		We will see more examples of geometric rank later (\cref{stropt}).
	\end{example}

	\subsection{Overview: notions of tensor rank}
	
	Before discussing our results we give an introduction to some of the existing notions of rank of tensors and their usefulness.
	Several interesting notions of rank of tensors have been studied in mathematics and computer science, each with their own applications. 
	%
	%
	As a warm-up we first discuss the familiar situation for matrices (which are tensors of order two).
	
	\paragraph{Matrices.}
	For any two matrices $M \in \FF^{m_1 \times m_2}$ and~$N \in \FF^{n_1 \times n_2}$ we write $M \leq N$ if there exist matrices $A,B$ such that $M = ANB$.
	Defining the matrix rank $\rank(M)$ of $M$ as the smallest number~$r$ such that $M$ can be written as a sum of~$r$ matrices that are outer products $(u_i v_j)_{ij}$ (i.e., rank-$1$ matrices),
	we see that in terms of the relation $\leq$ we can write the matrix rank as the minimisation
	\[
	\rank(M) = \min \{r \in \NN \mid M \leq I_r\},
	\]
	where $I_r$ is the $r\times r$ identity matrix. Matrix rank thus measures the ``cost'' of $M$ in terms of identity matrices. Let us define the subrank $\subrank(M)$ of $M$ as the ``value'' of $M$ in terms of identity matrices,
	\[
	\subrank(M) = \max \{s \in \NN \mid I_s \leq M\}.
	\]
	It turns out that subrank equals rank for matrices,
	\[
	\subrank(M) = \rank(M).
	\]
	Namely, if $\rank(M) = r$, then by using Gaussian elimination we can bring $M$ in diagonal form with exactly~$r$ nonzero elements on the diagonal, and so $I_r \leq M$.
	In fact, $M \leq N$ if and only if~$\rank(M) \leq \rank(N)$. 
	
	\paragraph{Tensors.}
	We will now focus on tensors of order three. (All our results in fact hold for tensors of arbitrary order. We stick to order three for simplicity, and because the proofs for order three and arbitrary order are essentially identical.)
	For any two tensors $S \in \FF^{m_1 \times m_2 \times m_3}$ and $T \in \FF^{n_1 \times n_2 \times n_3}$ we write~$S \leq T$ if there are matrices $A,B,C$ such that $S = (A,B,C)\cdot T$ where we define $(A,B,C) \cdot T \coloneqq (\sum_{a,b,c} A_{ia} B_{jb} C_{kc} T_{a,b,c})_{i,j,k}$. Thus $(A,B,C)\cdot T$ denotes taking linear combinations of the slices of $T$ in three directions according to $A$, $B$ and $C$.
	Let $T \in \FF^{n_1 \times n_2 \times n_3}$ be a tensor.
	The tensor rank $\rank(T)$ of $T$ is defined as the smallest number~$r$ such that $T$ can be written as a sum of $r$ tensors that are outer products~$(u_i v_j w_k)_{i,j,k}$.
	Similarly as for matrices, we can write tensor rank in terms of the relation $\leq$ as the ``cost'' minimisation
	\[
	\rank(T) = \min \{r \in \NN \mid M \leq I_r\}
	\]
	where $I_r$ is the $r \times r \times r$ identity tensor (i.e., the diagonal tensor with ones on the main diagonal).
	Strassen defined the subrank of $T$ as the ``value'' of $T$ in terms of identity tensors,
	\[
	\subrank(T) = \max \{s \in \NN \mid I_s \leq M\}.
	\]
	Naturally, since $\leq$ is transitive, we have that value is at most cost: $\subrank(T) \leq \rank(T)$.
	Unlike the situation for matrices, however, 
	there exist tensors for which this inequality is strict. 
	One way to see this is using the fact that a random tensor in $\FF^{n\times n\times n}$ has tensor rank close to $n^2$ whereas its subrank is at most $n$.
	Another way to see this is using the flattening ranks. These are defined as the ranks $\rank^{(i)}(T) \coloneqq \rank(T^{(i)})$ of the (``flattening'') matrices
	$T^{(1)} = (T_{i,j,k})_{i,(j,k)} \in \FF^{n_1 \times n_2n_3}$, $T^{(2)} = (T_{i,j,k})_{j,(i,k)} \in \FF^{n_2 \times n_1n_3}$, and $T^{(3)} = (T_{i,j,k})_{k,(i,j)} \in \FF^{n_3 \times n_1n_2}$ obtained from $T$ by grouping two of the three indices together. For these it holds that
	\[
	\subrank(T) \leq \rank^{(i)}(T) \leq \rank(T).
	\]
	Indeed, it is not hard to find tensors $T$ for which $\rank^{(1)}(T) < \rank^{(2)}(T)$. (Then, in particular, we have a strict inequality $\subrank(T) < \rank(T)$.)

	We will now discuss several methods to upper bound the subrank $\subrank(T)$ that improve on the flattening ranks~$\rank^{(i)}(T)$. Then in \cref{subsec:conn} we will discuss connections between subrank and problems in complexity theory and combinatorics.
	%
	%
	
	\paragraph{Slice rank.}
	In the context of the cap set problem, Tao~\cite{tao} defined the slice rank of any tensor~$T$ as the minimum number $r$ such that $T$ can be written as a sum of $r$ tensors of the form $(u_i V_{jk})_{i,j,k}$, $(u_j V_{ik})_{i,j,k}$ or $(u_k V_{ij})_{i,j,k}$ (i.e., an outer product of a vector and a matrix). In other words,  $\SR(T) \coloneqq \min \{ \rank^{(1)}(S_1) + \rank^{(2)}(S_2) + \rank^{(3)}(S_3) : S_1 + S_2 + S_3 = T\}$. Clearly, slice rank is at most the flattening rank~$\rank^{(i)}$ for any $i$. Tao proved that slice rank still upper bounds subrank,
	\[
	\subrank(T) \leq \SR(T) \leq \rank^{(i)}(T).
	\]
	The lower bound connects slice rank to problems in extremal combinatorics, which we will discuss further in \cref{subsec:conn}.
	The slice rank of large Kronecker powers of tensors was studied in
	\cite{MR3631613} and \cite{MR4495838}, which lead to strong connections with invariant theory and moment polytopes, and with the asymptotic spectrum of tensors introduced by Strassen \cite{strassen1988asymptotic}.
	
	\paragraph{Analytic rank.}
	Gowers and Wolf \cite{MR2773103} defined the analytic rank of any tensor $T \in \FF_p^{n_1 \times n_2 \times n_3}$ over the finite field~$\FF_p$ for a prime $p$
	as $\AR(T) \coloneqq -\log_p \bias(T)$, where the bias of $T$ is defined as $\bias(T) \coloneqq \EE \exp(2\pi i\, T(x,y,z)/p)$ with the expectation taken over all vectors $x \in \FF_p^{n_1}$, $y \in \FF_p^{n_2}$ and~$z \in \FF_p^{n_3}$.
	The analytic rank relates to subrank and tensor rank as follows:
	\[
	\subrank(T) \leq \frac{\AR(T)}{\AR(I_1)} \leq \rank(T)
	\]
	where $\AR(I_1) = -\log_p (1 - (1-1/p)^{2})$. The upper bound was proven in \cite{bhrushundi_et_al:LIPIcs:2020:12632}.
	Interestingly, the value of $\AR(T)/\AR(I_1)$ can be larger than~$\max_i \rank^{(i)}(T)$ for small $p$ (leading to non-trivial lower bounds on the tensor rank \cite{bhrushundi_et_al:LIPIcs:2020:12632}).
	The lower bound is essentially by Lovett~\cite{lovett2019analytic}.
	Namely, Lovett proves that $\AR(T)/\AR(I_1)$ upper bounds the size of the largest principal subtensor of $T$ that is diagonal. (We will discuss this further in \cref{subsec:conn}.)
	Lovett moreover proved that analytic rank is at most slice rank, $\AR(T) \leq \SR(T)$, and thus proposes analytic rank as an effective upper bound method for any type of problem where slice rank works well asymptotically.
	Lovett's result motivated us to study other parameters to upper bound the subrank, which led to geometric rank.
	
	%
	
	\paragraph{Higher-order tensors and partition rank.}

	So far we have discussed tensors of order three (and two). Now we will say something about $k$-tensors for general $k$. The notions of tensor rank, subrank and analytic rank generalize directly to this setting. Also slice rank generalizes directly as the minimum number $r$ such that $T$ can be written as a sum of $r$ tensors that are an outer product of a vectors and a $(k-1)$-tensor. In this higher-order situation there is another important tensor parameter, called partition rank, introduced by Naslund \cite{NASLUND2020105190} as a relaxation of slice rank (and used to solve problems in combinatorics). For~$k=3$, partition rank coincides with slice rank, but for $k>3$ partition rank can be strictly smaller than slice rank. The partition rank $\PR(T)$ of a $k$-tensor~$T$ is  defined as the minimum number $r$ such that $T$ can be written as a sum of $r$ tensors that are each an outer product of an $\ell$-tensor and a $(k-\ell)$-tensor, for arbitrary $1 \leq \ell \leq k-1$ (that can be different for each of the $r$ summands).

	A long line of work has shown upper bounds on $\PR(T)$ in terms of $\AR(T)$. This was first proven by Green and Tao \cite{DBLP:journals/cdm/GreenT09}, whose method was later improved on by Kaufman and Lovett \cite{DBLP:conf/focs/KaufmanL08} and by Bhowmick and Lovett \cite{bhowmick2015bias}. However, these results imply an upper bound on $\PR(T)$ that is an Ackermann-type function of $\AR(T)$. The dependence was later improved significantly by Janzer \cite{janzer2018low}. Recently, Janzer \cite{janzer2019polynomial} and Mili{\'c}evi{\'c} \cite{milicevic2019polynomial} proved polynomial upper bounds on $\PR$ in terms of $\AR$. Even more recently, a \emph{linear} bound was given in \cite{cohen2021partition} under the assumption that the finite field is of mildly large size (independent of the size $n$ of the tensor).


	\subsection{Connections of subrank to complexity theory and combinatorics}\label{subsec:conn}

	\paragraph{Arithmetic complexity of matrix multiplication and barriers.}
	A well-known problem in computer science concerning tensors is about the arithmetic complexity of matrix multiplication. Asymptotically how many scalar additions and multiplications are required to multiply two $m\times m$ matrices? The answer is known to be between $n^2$ and $C n^{2.37...}$, or in other words, the exponent of matrix multiplication $\omega$ is known to be between~$2$ and $2.37...$~\cite{le2014powers}.  The complexity of matrix multiplication turns out to be determined by the tensor rank of the matrix multiplication tensors~$\langle m,m,m\rangle$ corresponding to taking the trace of the product of three $m\times m$ matrices. Explicitly, $\langle m,m,m\rangle$ corresponds to the trilinear map~$\sum_{i,j,k = 1}^m x_{ij}y_{jk}z_{ki}$.
	In practice, upper bounds on the rank of the matrix multiplication tensors are obtained by proving a chain of inequalities
	\[
	\langle m,m,m\rangle \leq T \leq I_r
	\]
	for some intermediate tensor $T$, which is usually taken to be a Coppersmith--Winograd tensor~\cite{MR1056627}, and an $r$ that is small relatively to $m$.
	It was first shown by Ambainis, Filmus and Le Gall \cite{MR3388238} that there is a barrier for this strategy to give fast algorithms. This barrier was recently extended and simplified in several works
	\cite{MR3631613, blasiak2017groups, 8555139, DBLP:conf/coco/Alman19, MR4322894} and can be roughly phrased as follows: if the asymptotic subrank of the intermediate tensor $\lim_{n\to\infty} \subrank(T^{\otimes n})^{1/n}$ is strictly smaller than the asymptotic rank $\lim_{n\to \infty} \rank(T^{\otimes n})^{1/n}$, then one cannot obtain $\omega = 2$ via~$T$. These barriers rely on the fact that the asymptotic subrank of the matrix multiplication tensors   is maximal. Summarizing, the rank of the matrix multiplication tensors corresponds to the complexity of matrix multiplication whereas the subrank of any tensor corresponds to the a priori suitability of that tensor for use as an intermediate tensor. The upper bounds on the asymptotic subrank used in the aforementioned results were obtained via slice rank or the related theory of support functionals and quantum functionals~\cite{MR4495838}.
	
	\paragraph{Cap sets, sunflowers and independent sets in hypergraphs.}
	
	Several well-known problems in extremal combinatorics can be phrased in terms of the independence number of families of hypergraphs. One effective collection of upper bound methods proceeds via the subrank of tensors. (For other upper bound methods, see e.g.~the recent work of Filmus, Golubev and Lifshitz~\cite{filmus2019high}.) A hypergraph is a a symmetric subset $E \subseteq V \times V \times V$.
	An independent set of $E$ is any subset $S\subseteq V$ such that $S$ does not induce any edges in $E$, that is, $E \cap (S \times S \times S) = \emptyset$. The independence number~$\alpha(E)$ of $E$ is the largest size of any independent set in $E$. For any hypergraph $E \subseteq [n] \times [n] \times [n]$, if $T \subseteq \FF^{n \times n \times n}$ is any tensor supported on $E \cup \{(i,i,i) : i \in [n]\}$, then
	\[
	\alpha(E) \leq \subrank(T).
	\]
	Indeed, for any independent set $S$ of $E$ the subtensor $T|_{S \times S \times S}$ is a diagonal tensor with nonzero diagonal and $T \geq T|_{S \times S \times S}$.
	For example, the resolution of the cap set problem by Ellenberg and Gijswijt~\cite{MR3583358}, as simplified by Tao~\cite{tao}, can be thought of as upper bounding the subrank of tensors corresponding to strong powers of the hypergraph consisting of the edge $(1,2,3)$ and permutations. The Erd\H{o}s--Szemer\'edi sunflower problem for three petals was resolved by Naslund and Sawin \cite{naslund2017upper} by similarly considering the strong powers of the hypergraph consisting of the edge $(1,1,2)$ and permutations. In both cases slice rank was used to obtain the upper bound.
	Another result in extremal combinatorics via analytic rank was recently obtained by Bri\"et~\cite{briet2019subspaces}.
	
	
	\subsection{Our results}
	
	We establish a number of basic properties of geometric rank. These imply close connections between geometric rank and
	other notions of rank, and thus bring in a new set of algebraic geometric tools to help reason about the
	various notions of rank. In particular, our new upper bounds on the (border) subrank of matrix multiplication follow easily from our basic
	results.
	
	\paragraph{Subrank and slice rank.}
	We prove that the geometric rank $\GR(T)$ is at most the slice rank~$\SR(T)$ of Tao \cite{tao} and at least the subrank~$\subrank(T)$ of Strassen \cite{strassen1987relative} (see \cref{theo:basic-bounds}).
	\begin{main}\label{theo:basic-bounds-intro}
		For any tensor $T$,
		\[
		\subrank(T) \le \GR(T) \le \SR(T).
		\]
	\end{main}
	We thus add $\GR$ to the collection of methods to upper bound the subrank of tensors~$\subrank$ and in turn the independence number of hypergraphs. \cref{theo:basic-bounds-intro} in particular says that in any application of slice rank to upper bound an independence number, the geometric rank provides an upper bound that is at least as good (and indeed sometimes better).
	We prove these inequalities by proving that $\GR$ is monotone under~$\leq$, additive under the direct sum of tensors, and has value $1$ on the trivial $I_1$ tensor.
	We also give a second more direct proof of this inequality (\cref{moredirect}).
	
	\paragraph{Border subrank.}
	We extend our upper bound on subrank to {\em border subrank}, the (widely studied) approximative version of subrank.
	
	The main ingredient in this extension is the following fact (which itself exploits the algebraic-geometric nature of definition of $\GR$): the set of tensors $\{T \in \FF^{n \times n \times n} \mid \GR(T) \leq m\}$ is closed in the Zariski topology.\footnote{That is, the statement $\GR(T) \leq m$ is characterized by the vanishing of a finite number of polynomials.}
	In other words, geometric rank is lower-semicontinuous. This implies that the geometric rank also upper bounds the border subrank~$\bordersubrank(T)$ (see \cref{border}).
	\begin{main}\label{border-intro}
		For any tensor $T$,
		\[
		\bordersubrank(T) \leq \GR(T).
		\]
	\end{main}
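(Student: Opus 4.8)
The plan is to combine two ingredients: the monotonicity of geometric rank under restriction, and its lower-semicontinuity. Recall that $\bordersubrank(T)$ is the largest $s$ with $I_s \degenleq T$, and that $I_s \degenleq T$ means exactly that $I_s$ lies in the Zariski closure of the set $\mathcal{R}_T = \{S \mid S \le T\}$ of restrictions of $T$ (the standard description of degeneration as a limit of restrictions). So it suffices to show that if $I_s$ lies in the Zariski closure of $\mathcal{R}_T$, then $s \le \GR(T)$.

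The crux is the promised fact that $Z_m := \{S \mid \GR(S) \le m\}$ is Zariski-closed (among tensors of any fixed shape). I would prove this by placing the varieties $V_S := \{(x,y) \mid \forall z\colon S(x,y,z) = 0\}$, whose codimension is $\GR(S)$, into a family. The equations $\sum_{i,j} S_{i,j,k}\, x_i y_j = 0$, for $k$ ranging over the slice index, are polynomial in the entries of $S$ and separately homogeneous in $x$ and in $y$, so they cut out a closed subvariety $\mathcal{X} \subseteq \FF^{n\times n\times n} \times \Proj^{n-1} \times \Proj^{n-1}$ whose projection $\pi$ to $\FF^{n\times n\times n}$ is a \emph{projective} morphism, and whose fiber over $S$ is the image of $V_S \cap (\{x\neq 0\}\times\{y\neq 0\})$ in $\Proj^{n-1}\times\Proj^{n-1}$. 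Since the two trivial loci $\FF^n\times\{0\}$ and $\{0\}\times\FF^n$ always lie in $V_S$ and contribute the constant dimension $n$, one has $\dim V_S = \max(n,\, 2 + \dim \pi^{-1}(S))$, so $Z_m = \{S \mid \dim V_S \ge 2n-m\}$ differs from $\{S \mid \dim\pi^{-1}(S) \ge 2n-m-2\}$ only when $m \ge n$, in which case $Z_m$ is everything; the latter set is closed by the upper-semicontinuity of fiber dimension for projective morphisms, so $Z_m$ is closed. I expect this passage — projectivising $V_S$ and keeping correct track of the ever-present trivial components — to be the main obstacle; the remaining steps are assembly.

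Granting that $Z_m$ is closed, I finish as follows. Geometric rank is monotone under restriction, i.e.\ $S \le T$ implies $\GR(S) \le \GR(T)$ (one of the basic properties established en route to \cref{theo:basic-bounds}), so $\mathcal{R}_T \subseteq Z_{\GR(T)}$; since $Z_{\GR(T)}$ is Zariski-closed it contains the Zariski closure of $\mathcal{R}_T$. Hence if $I_s \degenleq T$ then $I_s \in Z_{\GR(T)}$, i.e.\ $\GR(I_s) \le \GR(T)$. Finally $\GR(I_s) = s$: the $3$-slices of $I_s$ are the matrices $e_k e_k^{T}$ with a single nonzero entry, so its associated variety $\{(x,y) \mid x_1 y_1 = \cdots = x_s y_s = 0\}$ (together with the coordinates left free by any zero-padding) is a union of coordinate subspaces each of dimension $2n-s$, hence of codimension $s$. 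Therefore $s \le \GR(T)$ whenever $I_s \degenleq T$, and taking the maximum over such $s$ gives $\bordersubrank(T) \le \GR(T)$.
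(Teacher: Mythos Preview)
Your proof is correct and follows essentially the same route as the paper (\cref{aranklim} and \cref{arankdegenmon}): lower-semicontinuity of $\GR$, combined with $\leq$-monotonicity and $\GR(I_s)=s$, yields $\degenleq$-monotonicity and hence the bound. The one place you work harder than necessary is the lower-semicontinuity step: the paper applies upper-semicontinuity of fiber dimension (\cref{harris}) directly to the affine family $X=\{(T,x,y)\mid T(x,y,\cdot)=0\}$, whose fiber over $T$ is already $V_T$, so no projectivization or bookkeeping of the trivial loci $\{x=0\}\cup\{y=0\}$ is needed.
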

	The geometric rank $\GR$ is a new parameter in the study of tensors. Indeed we show that $\GR$ is not the same  parameter as the subrank~$\subrank$, border subrank~$\bordersubrank$ or slice rank $\SR$ (\cref{notqbar} and \cref{notsr}).\footnote{More broadly, it has been brought to our attention that Schmidt in \cite[Sec.~16]{MR781588} and \cite[Sec.~6]{MR774104} studied a parameter for homogenous polynomials that is similar to our geometric rank (with an extra polarization step).}
	
	\paragraph{Matrix multiplication.}
	In the study of the complexity of matrix multiplication, Strassen~\cite{strassen1987relative} proved that for the matrix multiplication tensors $\langle m,m,m\rangle \in \FF^{m^2 \times m^2 \times m^2}$ the border subrank is lower bounded by $\lceil\tfrac34 m^2\rceil \leq \bordersubrank(\langle m,m,m\rangle)$. We prove that this lower bound is optimal by proving
	the following (see \cref{stropt}). 
	\begin{main}\label{stropt-intro}
		For any positive integers $e \leq h \leq \ell$,
		\[
		\bordersubrank(\langle e,h,\ell\rangle) = \GR(\langle e,h,\ell\rangle) = \begin{cases}
		eh - \floor{\frac{(e + h - \ell)^2}{4}} & \textnormal{if } e + h \geq \ell,\\
		eh & \textnormal{otherwise}.
		\end{cases}
		\]
		In particular,
		we have $\subrank(\langle m,m,m \rangle) \leq \bordersubrank(\langle m,m,m\rangle) = \GR(\langle m,m,m\rangle) =\ceil{\tfrac34m^2}$ for any $m \in \NN$.
	\end{main}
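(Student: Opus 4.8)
The plan is to compute $\GR(\langle e,h,\ell\rangle)$ directly from its definition as a codimension, and then to deduce the equality with $\bordersubrank$ by sandwiching: the upper bound $\bordersubrank\le\GR$ is free from the earlier theorem, and the matching lower bound is a rectangular version of Strassen's $1987$ degeneration. For the first step, take the mode-$1$ and mode-$2$ slots of $\langle e,h,\ell\rangle\in\FF^{eh\times h\ell\times\ell e}$ to be an $e\times h$ matrix $X$ and an $h\times\ell$ matrix $Y$; then the trilinear form is $\langle e,h,\ell\rangle(X,Y,Z)=\tr(XYZ)$, and since the trace pairing $\FF^{e\times\ell}\times\FF^{\ell\times e}\to\FF$, $(A,Z)\mapsto\tr(AZ)$, is nondegenerate, the condition ``$\tr(XYZ)=0$ for all $Z$'' is equivalent to $XY=0$. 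Hence
\[
\GR(\langle e,h,\ell\rangle)\;=\;eh+h\ell-\dim V,\qquad V\coloneqq\{(X,Y)\in\FF^{e\times h}\times\FF^{h\times\ell}:XY=0\}.
\]

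Next I would compute $\dim V$ by stratifying $V$ according to $r\coloneqq\rk(X)\in\{0,1,\dots,e\}$ (note $\min(e,h)=e$). The locus of rank-exactly-$r$ matrices in $\FF^{e\times h}$ is irreducible of dimension $r(e+h-r)$, and for a fixed $X$ of rank $r$ the set $\{Y:XY=0\}$ is the linear space of $h\times\ell$ matrices whose columns lie in $\ker X$, of dimension $(h-r)\ell$; thus the $r$-th stratum is irreducible of dimension
\[
g(r)\;=\;r(e+h-r)+(h-r)\ell\;=\;-r^2+(e+h-\ell)\,r+h\ell,
\]
so $\dim V=\max_{0\le r\le e}g(r)$. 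The quadratic $g$ has vertex at $r^\ast=(e+h-\ell)/2$, which lies in $[0,e]$ exactly when $e+h\ge\ell$ (the bound $r^\ast\le e$ being automatic from $h\le\ell$). If $e+h\ge\ell$ the maximum over integer $r$ is $h\ell+\lfloor(e+h-\ell)^2/4\rfloor$, and if $e+h<\ell$ then $g$ is strictly decreasing on $[0,e]$ so the maximum is $g(0)=h\ell$. Substituting into the display gives exactly the claimed two-case value of $\GR$; specializing $e=h=\ell=m$ yields $m^2-\lfloor m^2/4\rfloor=\lceil 3m^2/4\rceil$.

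Finally, the inequality $\bordersubrank(\langle e,h,\ell\rangle)\le\GR(\langle e,h,\ell\rangle)$ is a special case of the theorem $\bordersubrank(T)\le\GR(T)$ proved above, and $\subrank\le\bordersubrank$ always; so it remains to show $\bordersubrank(\langle e,h,\ell\rangle)\ge\GR(\langle e,h,\ell\rangle)$. I would first reduce to the case $\ell\le e+h$ via the restriction $\langle e,h,\ell\rangle\degengeq\langle e,h,\min(\ell,e+h)\rangle$ (obtained by zeroing out slices, and not changing the target value), and then invoke Strassen's border degeneration of matrix multiplication onto a diagonal tensor, i.e. produce a border degeneration $\langle e,h,\ell\rangle\degengeq I_s$ with $s=eh-\lfloor(e+h-\ell)^2/4\rfloor$. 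I expect the $\GR$ computation to be essentially routine once $V$ is pinned down; the two points that need care are (i) confirming that $V$ is the union of the strata described with exactly those irreducible dimensions (standard facts about determinantal loci, plus ``the dimension of a finite union is the maximum of the dimensions''), and, more substantially, (ii) the lower bound in the rectangular case --- checking that Strassen's construction, usually stated for the cube $\langle m,m,m\rangle$, really does yield a diagonal minor of size $eh-\lfloor(e+h-\ell)^2/4\rfloor$ for general $e\le h\le\ell$.
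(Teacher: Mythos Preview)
Your proposal is correct and follows the paper's proof almost exactly: the paper also identifies $V=\{(X,Y):XY=0\}$, stratifies by $r=\matrixrank(X)$ (formally via Theorem~\ref{asmax} on fiber dimensions rather than quoting determinantal-variety facts directly), obtains the same quadratic in $r$, and then sandwiches using $\bordersubrank\le\GR$ together with Strassen's rectangular lower bound~\eqref{strineq}. The only cosmetic difference is that the paper simply cites Strassen's Theorem~6.6 for all $e\le h\le\ell$ rather than reducing the case $\ell>e+h$ first, so your concern~(ii) is already covered by that citation.
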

	Our computation of $\GR$ here is a calculation of the dimension of a variety. We do this by studying the dimension of various sections of that
	variety, which then reduces to linear algebraic questions about matrices (we are talking about matrix multiplication after all).

	Our result improves the previously best known upper bound on the subrank of matrix multiplication of Christandl, Lucia, Vrana and Werner~\cite{christ2018tensor}, which was $\subrank(\langle m,m,m\rangle) \leq m^2 - m$. In fact, our upper bound on~$\GR(\langle e,h,\ell\rangle)$ exactly matches the lower bound on~$\bordersubrank(\langle e, h, \ell\rangle)$ of Strassen~\cite{strassen1987relative}, for any nonnegative integers $e$, $h$, and $\ell$. We thus solve the problem of determining the exact value of $\bordersubrank(\langle e, h, \ell\rangle)$.
	
	Not only does \cref{stropt-intro} provide an instance where geometric rank performs strictly better than slice rank (i.e. is smaller), it also shows that the geometric rank is a precise method, since in this case the value precisely coincides with the border subrank.
	
	\paragraph{Analytic rank.}
	Finally, we establish a strong connection between geometric rank and analytic rank.
	
	We prove that for any tensor $T \in \ZZ^{n_1 \times n_2 \times n_3} \subseteq \CC^{n_1 \times n_2 \times n_3}$ with integer coefficients, the geometric rank of $T$ equals the {\em liminf} of the analytic rank of the tensors~$T_p \in \FF_p^{n_1 \times n_2 \times n_3}$ obtained from~$T$ by reducing all coefficients modulo $p$ and letting $p$ go to infinity over all primes (see \cref{theo:liminf}).
	\begin{main}
		For every tensor $T$ over $\ZZ$ we have 
		\[
		\liminf_{p\to\infty} \AR(T_p) = \GR(T).
		\]
	\end{main}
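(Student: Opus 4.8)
The plan is to rewrite the bias of $T_p$ as a normalized count of $\FF_p$-points on the reduction of the variety appearing in the definition of $\GR$, and to match its order of growth with the dimension of that variety using standard facts about varieties over finite fields. Using orthogonality of additive characters in the $z$-variable: for fixed $x\in\FF_p^{n_1}$ and $y\in\FF_p^{n_2}$ the map $z\mapsto T_p(x,y,z)$ is a linear form, so $\EE_z\exp(2\pi i\,T_p(x,y,z)/p)$ equals $1$ if $x^T M_k y=0$ for all $k$ and $0$ otherwise. Writing $V$ for the affine scheme over $\ZZ$ cut out by the bilinear forms $x^T M_k y$ and $V_p$ for its reduction modulo $p$, this yields $\bias(T_p)=|V_p(\FF_p)|/p^{n_1+n_2}$, hence
\[
\AR(T_p)=n_1+n_2-\log_p|V_p(\FF_p)|.
\]
Since $V_\CC$ is the base change of the generic fiber $V_\QQ$, we have $\dim V_\CC=\dim V_\QQ=:d$ and $\GR(T)=n_1+n_2-d$, so the statement is equivalent to $\limsup_{p\to\infty}\log_p|V_p(\FF_p)|=d$. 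I would also record the usual spreading-out facts: for all but finitely many $p$ the fiber $V_p$ has dimension exactly $d$, and $V_p$ has the same number of geometrically irreducible components over $\overline{\FF_p}$ as $V_{\overline{\QQ}}$.

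For $\liminf_p\AR(T_p)\ge\GR(T)$ I would invoke the standard uniform bound $|V_p(\FF_p)|\le C\,p^{d}$ for all large $p$, with $C$ depending only on the ambient dimension and on the number and degrees of the defining equations (all independent of $p$). This gives $\AR(T_p)\ge n_1+n_2-d-\log_p C\to n_1+n_2-d=\GR(T)$.

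For the reverse inequality it suffices to exhibit infinitely many primes $p$ along which $\log_p|V_p(\FF_p)|\to d$. Fix a $d$-dimensional geometrically irreducible component $W$ of $V_{\overline{\QQ}}$ and a number field $K$ over which $W$ is defined; using the classical fact that $K$ has infinitely many completely split primes, for all but finitely many such $p$ the reduction $W_p$ is geometrically irreducible over $\FF_p$ of dimension $d$, so the Lang--Weil estimate gives $|V_p(\FF_p)|\ge|W_p(\FF_p)|\ge\tfrac12 p^{d}$ and hence $\AR(T_p)\le n_1+n_2-d+\log_p 2\to\GR(T)$ along these primes. Together with the previous paragraph, this shows the liminf is attained along the split primes and equals $\GR(T)$.

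The crux is the reverse inequality: one must ensure the $\FF_p$-point count actually reaches the top order $p^{d}$ infinitely often, which is exactly the question of whether a top-dimensional component of the variety becomes geometrically irreducible over $\FF_p$; this is where the existence of completely split primes and the Lang--Weil lower bound are essential, and where a crude point-count upper bound alone would not produce a matching lower bound. The spreading-out inputs (constancy of the dimension and of the component count for large $p$, and the uniform upper bound on point counts) are routine but must be stated carefully so that the two one-sided estimates genuinely meet.
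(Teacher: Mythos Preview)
Your proposal is correct and follows the same architecture as the paper: both rewrite $\AR(T_p)$ as a normalized $\FF_p$-point count on the reduction of $\V(T)$, bound it above uniformly via a Schwartz--Zippel-type estimate after showing $\dim V_p=d$ for large $p$, and bound it below along an infinite set of primes via Lang--Weil applied to a top-dimensional geometrically irreducible component that is defined over $\FF_p$. The only difference is packaging of the auxiliary inputs---where you invoke standard spreading-out and the existence of completely split primes in the number field of definition of a component, the paper explicitly proves an extended Bertini--Noether theorem for reducible varieties and constructs homomorphisms $\ZZ[S]\to\FF_p$ via the primitive element theorem together with a density result of Berend--Bilu on modular roots; these are the same facts in different language.
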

	This result is in fact the source of our definition of geometric rank. The analytic rank of a tensor is defined as the bias of a certain polynomial on random inputs. By simple transformations,
	computing the analytic rank over $\F_p$ reduces to computing the number of solutions of a system of polynomial equations over $\F_p$. Namely,
	\begin{equation}\label{eq:AR-identity}
	\AR(T_p) = n_1 + n_2 -\log_p \abs[0]{ \{ (x,y) \in \FF_p^{n_1} \times \FF_p^{n_2} : T_p(x,y,\cdot)=0 \}}.
	\end{equation}
	This system of polynomial equations defines a variety,
	and it is natural to expect that the dimension of the variety roughly determines the number of $\F_p$-points of the variety. This expectation is not true in general, but under highly controlled circumstances
	something like it is true. This is how we arrived at the definition of geometric rank (which eventually turned out to have very natural properties on its own, without this connection to analytic rank).
	Actually establishing the above liminf result is quite roundabout, and requires a number of tools from algebraic geometry and number theory. 
	
	We stress that analytic rank is only defined for tensors over prime fields of positive characteristic, whereas geometric rank is defined for tensors over any field. By the aforementioned result, geometric rank over the complex numbers can be thought of as an extension of analytic rank to characteristic~0. Finding an extension of analytic rank beyond finite fields is mentioned as an open problem by Lovett~\cite[Problem 1.10]{lovett2019analytic}.
	
	\subsection{Higher-order tensors}\label{subsec:ktensors}
	
	All definitions and theorems (and their proofs) stated above for tensors of order three naturally generalize to tensors of arbitrary order. For concreteness we state the higher-order versions here explicitly. 

	We begin with the higher-order definition. For any tensor $T$  with coefficients $T_{i_1,i_2,\ldots,i_k}$ in an algebraically closed field $\FF$
	we define the geometric rank~$\GR(T)$ as follows.
	%
	Viewing $T$ as the $k$-linear map $T : \FF^{n_1} \times \FF^{n_2}\times \cdots \times \FF^{n_k} \to \FF : (x_1,x_2, \ldots, x_k) \mapsto \sum_{i_1,i_2,\ldots, i_k} T_{i_1, i_2, \ldots, i_k}\, (x_1)_{i_1} (x_2)_{i_2} \cdots (x_k)_{i_k}$, the geometric rank of $T$ is defined as
	\[
	\GR(T) = \codim \{ (x_1, x_2, \ldots, x_{k-1}) \in \FF^{n_1} \times \FF^{n_2} \times \cdots \times \FF^{n_{k-1}} \mid \forall x_k \in \FF^{n_k} : T(x_1,x_2,\ldots, x_k) = 0\}.
	\]
	
	Also on $k$-tensors $\GR$ is monotone under $\leq$, additive under the direct sum of tensors, and has value 1 on the trivial $I_1$ tensor. This leads to a proof of the following theorem, which extends \cref{theo:basic-bounds-intro} and \cref{border-intro} from 3-tensors to $k$-tensors for any $k$. Note how slice rank is here replaced by the (stronger) partition rank.

	\begin{main}
		For any $k$-tensor $T$,
		\[
		\subrank(T) \leq \bordersubrank(T) \le \GR(T) \le \PR(T).
		\]
	\end{main}
	
	Finally, for $k$-tensors the following relation between analytic rank and geometric rank holds.

	\begin{main}
		For any $k$-tensor $T$ over $\ZZ$ we have 
		\[
		\liminf_{p\to\infty} \AR(T_p) = \GR(T).
		\]
	\end{main}

	\subsection*{Follow-up work}
	The recent work of Cohen and Moshkovitz in \cite{MR4492184} crucially uses geometric rank as an intermediary to prove a tight relation between the slice rank and the analytic rank of 3-tensors. Their later work extends this result to a similar relation between the partition rank and the analytic rank of $k$-tensors over large fields \cite{cohen2021partition}. 
	
	In other recent follow-up work to our paper, Geng and Landsberg~\cite{geng2021geometry} classify tensors with non-maximal geometric rank in $\CC^3 \otimes \CC^3 \otimes \CC^3$, classify tensors with geometric rank two, and show that upper bounds on geometric rank imply lower bounds on tensor rank. This line of work was further extended by Geng~\cite{https://doi.org/10.48550/arxiv.2201.03615}.

	\subsection*{Organization of this paper}
	
	In the next section we formally define geometric rank. In~\cref{sec:alternative}, we give
	some alternative definitions of geometric rank that help us reason about it.
	In~\cref{sec:basicbounds} and~\cref{sec:border} we show the relationship between geometric rank,
	slice rank, subrank and border subrank. In~\cref{sec:matmult} we use the established properties
	of geometric rank to give a proof of our upper bound on the (border) subrank of matrix multiplication.
	In~\cref{sec:moredirect} we give a more direct proof of the inequality between slice rank and geometric rank.
	Finally, in~\cref{sec:ARGR} we establish the relationship between geometric and analytic ranks.

	\section{Geometric rank}\label{sec:gr}
	In this section we set up some general notation and define geometric rank.
	Let $\FF$ be an algebraically closed field.  

	\paragraph{Dimension and codimension.} The notion of dimension that we use is the standard notion in algebraic geometry, and is defined as follows. Let~$V \subseteq \FF^n$ be a (possibly reducible) algebraic variety (i.e.~the common zero set of a set of polynomials). The codimension $\codim V$ is defined as $n - \dim V$.
	The dimension $\dim V$ is defined as the
	length of a maximal chain of irreducible subvarieties of~$V$~\cite{harris2013algebraic}. In our proofs we will use basic facts about dimension: the dimension of a linear space coincides with the notion from linear algebra, the dimension is additive under the cartesian product and the dimension of a locally closed set equals the dimension of its closure. We will moreover use a basic theorem about the dimension of fibers under projections $(x,y) \mapsto y$. 

	\paragraph{Notation about tensors.}
	Let $\FF^{n_1 \times n_2 \times n_3}$ be the set of all three-dimensional arrays
	\[
	T = (T_{i,j,k})_{i\in [n_1], j \in [n_2], k \in [n_3]}
	\]
	with $T_{i,j,k} \in \FF$. We refer to the elements of $\FF^{n_1 \times n_2 \times n_3}$ as the \defin{$n_1 \times n_2 \times n_3$ tensors over $\FF$}. To any tensor $T \in \FF^{n_1 \times n_2 \times n_3}$ we associate the polynomial in $\FF[x_1, \ldots, x_{n_1}, y_1, \ldots, y_{n_2}, z_1, \ldots, z_{n_3}]$ defined~by
	\[
	T(x_1, \ldots, x_{n_1}, y_1, \ldots, y_{n_2}, z_1, \ldots, z_{n_3}) = \sum_{i \in [n_1]} \sum_{j\in [n_2]} \sum_{k\in [n_3]} T_{i,j,k}\, x_i y_j z_k
	\]
	and the trilinear map $\FF^{n_1} \times \FF^{n_2} \times \FF^{n_3} \to \FF$ defined by
	\[
	T(x, y, z) = T(x_1, \ldots, x_{n_1}, y_1, \ldots, y_{n_2}, z_1, \ldots, z_{n_3}).
	\]
	

	\begin{definition}\label{grdef}
		The \emph{geometric rank} of a tensor $T\in \FF^{n_1 \times n_2 \times n_3}$, written $\GR(T)$, is the codimension of the set of elements $(x,y) \in \FF^{n_1} \times \FF^{n_2}$ such that $T(x,y,z) = 0$ for all $z\in \FF^{n_3}$. That is,
		%
		\[
		\GR(T) \coloneqq \codim \{ (x,y) \in \FF^{n_1} \times \FF^{n_2} \mid \forall z \in \FF^{n_3} : T(x, y, z) = 0\}.
		\]
		For any $(x,y) \in \FF^{n_2} \times \FF^{n_3}$ we define the vector $T(x,y,\cdot) = (T(x,y, e_k))_{k=1}^{n_3}$, where $e_1, \ldots, e_{n_3}$ is the standard basis of $\FF^{n_3}$.
		In this notation the geometric rank is given by
		\[
		\GR(T) = \codim \{ (x,y) \mid T(x, y, \cdot) = 0\}.
		\]
		For later use we also define the vectors $T(x, \cdot,z) = T(x,e_j, z)_{j}$ and $T(\cdot,y,z) = T(e_i, y, z)_{i}$, and
		we define the matrices $T(x,\cdot,\cdot) = T(x,e_j, e_k)_{j,k}$,\,  $T(\cdot, y,\cdot) = T(e_i, y, e_k)_{i,k}$ and $T(\cdot,\cdot,z) = T(e_i, e_j, z)_{i,j}$.
	\end{definition}
	We defined the geometric rank of tensors with coefficients in an algebraically closed field. For tensors with coefficients in an arbitrary field we naturally define the geometric rank via the embedding of the field in its algebraic closure.

	\paragraph{Computer software.}
	One can compute the dimension of an algebraic variety $V\subseteq \FF^n$ using computer software like Macaulay2~\cite{M2} or Sage~\cite{sagemath}. This allows us to easily compute the geometric rank of small tensors. For example, for \cref{ex1} in the introduction over the field~$\FF = \CC$, one verifies in Macaulay2 with the commands
	\begin{Verbatim}
	R = CC[x1,x2,y1,y2];
	dim ideal(x1*y1, x2*y1 + x1*y2)
	\end{Verbatim}
	or in Sage with the commands
	\begin{Verbatim}
	A.<x1,x2,y1,y2> = AffineSpace(4, CC);
	Ideal([x1*y1, x2*y1 + x1*y2]).dimension()
	\end{Verbatim}
	that $\dim V = 2$.
	
	\paragraph{Computational complexity.}
	Koiran~\cite{646091} studied the computational complexity (as $n$ grows) of the problem of deciding whether, given an algebraic variety $V\subseteq \CC^n$ and a number $d \in \NN$, the dimension of~$V$ is at least $d$. When $V$ is given by polynomial equations over the integers the problem is in PSPACE, and assuming the Generalized Riemann Hypothesis the problem is in the Arthur--Merlin class~AM. Thus the same upper bounds apply to computing $\GR$.
	In the other direction, Koiran showed that computing dimension of algebraic varieties in general is NP-hard. We know of no hardness results for computing~$\GR$.

	
	\paragraph{Higher-order tensors.}
	Our definition of geometric rank extends naturally from 3-tensors $\FF^{n_1 \times n_2 \times n_3}$ to 
	$k$-tensors $\FF^{n_1 \times \cdots \times n_k}$ for any $k\geq 2$ by defining the geometric rank of any $k$-tensor $T \in \FF^{n_1 \times \cdots \times n_k}$ as
	\[
	\GR(T) \coloneqq \codim \{ (x_1, \ldots, x_{k-1}) \in \FF^{n_1} \times \cdots \times \FF^{n_{k-1}} \mid \forall x_k \in \FF^{n_k} : T(x_1, \ldots, x_{k-1}, x_k) = 0\}.
	\]
	For $k=2$ geometric rank coincides with matrix rank.
	Our results extend naturally to $k$-tensors with this definition (see also \cref{subsec:ktensors}), but for clarity our exposition will be in terms of 3-tensors.

	\section{Alternative descriptions of geometric rank}
	\label{sec:alternative}
	We give two alternative descriptions of geometric rank that we will use later. The first description relates geometric rank to the matrix rank of the matrices $T(x,\cdot, \cdot) = (T(x,e_j, e_k))_{j,k}$. The second description (which follows from the first) shows that the geometric rank of $T(x,y,z)$ is invariant under permuting the variables~$x$,~$y$ and $z$.
	Both theorems rely on an understanding of the dimension of fibers of a (nice) map.

	
	

	\begin{theorem}\label{asmax} Let $T \in \FF^{n_1 \times n_2 \times n_3}$ be a tensor. Then
		\begin{enumerate}[label=\upshape(\roman*)]
		\item 
		$\dim \{ (x,y) \mid T(x,y,\cdot) = 0\} 
		=\max_i \left(\dim \left\{x \mid \corank T(x,\cdot,\cdot) = i\right\} + i\right).$
		\item 
		$\GR(T) =\min_j \left(\codim \left\{x \mid \matrixrank T(x,\cdot,\cdot) = j\right\} + j\right).$
		\end{enumerate}
	\end{theorem}
	\begin{proof}
		(i)
		Let $V = \{(x,y) \mid T(x,y,\cdot) = 0\}$. The goal is to reexpress $\dim V$ as in the claim. 
		Let $W = \FF^{n_1}$.
		Let $\pi : V \to W$ map $(x,y)$ to $x$.
		Define the sets $W_i = \{ x \mid \corank(T(x,\cdot,\cdot)) = i\}$.
		The rank-nullity theorem for matrices gives for any fixed $x$ that $\corank(T(x,\cdot,\cdot)) = \dim \{y \mid T(x,y,\cdot) = 0\}$.
		The sets $W_i$ are locally closed, that is, each $W_i$ is the intersection of an open set and a closed set.
		Let $V_i = \pi^{-1}(W_i)$. The set~$V_i$ is also locally closed.
		We have that $W = \cup_i W_i$ and so $V = \cup_i V_i$. Therefore, $\dim V = \max_i \dim V_i$.
		It follows from \cite[Proposition 10.6.1(iii)]{EGA-IV3} that $\dim V_i = \dim W_i + i$.
		From this claim follows $\dim V = \max_i (\dim W_i + i)$, which finishes the proof.
		%

		(ii) This is a simple consequence of (i). Namely, we have by definition that 
		\[
			\GR(T) = \codim \{ (x,y) \mid T(x,y,\cdot) = 0\} = n_1 + n_2 - \dim \{ (x,y) \mid T(x,y,\cdot) = 0\}.
		\]
		We apply claim (i) to find
		\begin{align*}
			n_1 + n_2 - \dim \{ (x,y) \mid T(x,y,\cdot) = 0\}
			= n_1 + n_2 - \max_i \bigl( \dim \{x : \corank T(x,\cdot,\cdot) = i\} + i \bigr).
		\end{align*}
		Taking the maximization outside and using the relation between dimension and codimension and the relation between rank and corank, we find
		\begin{align*}
		&n_1 + n_2 - \max_i \bigl( \dim \{x : \corank T(x,\cdot,\cdot) = i\} + i \bigr)\\
		 &= \min_i\bigl( n_1 + n_2 - \bigl( \dim \{x : \corank T(x,\cdot,\cdot) = i\} + i\bigr)\bigr)\\
		&= \min_i \bigl(n_1 - \dim \{x : \matrixrank T(x, \cdot, \cdot) = n_2 - i\} + n_2 - i\bigr)\\
		&= \min_j\bigl( \codim \{x : \matrixrank T(x,\cdot, \cdot) = j\} + j\bigr).
		\end{align*}
		This proves (ii).
	\end{proof}

	
	\begin{theorem}\label{symmetry} For any tensor $T$,
		\begin{align*}
		\GR(T) &= \codim \{(x,y) \mid T(x, y, \cdot)= 0\}\\
		&= \codim \{(x,z) \mid T(x, \cdot, z)=0 \}\\
		&= \codim \{(y,z) \mid T(\cdot, y, z)=0\}.
		\end{align*}
	\end{theorem}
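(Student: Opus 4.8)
The plan is to show that all three codimensions equal a common symmetric quantity, and the natural candidate is the expression obtained from \cref{asmax}. By \cref{asmax} applied to the projection $(x,y)\mapsto x$ we have
\[
\dim \{(x,y)\mid T(x,y,\cdot)=0\} = \max_i \dim\{x \mid \corank T(x,\cdot,\cdot)=i\}+i.
\]
The key observation is that this right-hand side can be rewritten purely in terms of the ranks of the pencil of matrices $T(x,\cdot,\cdot)$, and crucially that $\corank T(x,\cdot,\cdot) = \dim\{y\mid T(x,y,\cdot)=0\}$ simply counts, for fixed $x$, the dimension of the kernel of the bilinear form $(y,z)\mapsto T(x,y,z)$. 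So I would first establish the symmetric reformulation: for a fixed $x$, the matrix $T(x,\cdot,\cdot)\in\FF^{n_2\times n_3}$ has the same rank whether we read it as a map $\FF^{n_2}\to\FF^{n_3}$ or $\FF^{n_3}\to\FF^{n_2}$, i.e. $\dim\{y\mid T(x,y,\cdot)=0\} - \dim\{z\mid T(x,\cdot,z)=0\} = n_2-n_3$ (a constant independent of $x$). This is just the rank-nullity symmetry for a single matrix.

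Next I would run \cref{asmax} a second time, now for the projection $(x,z)\mapsto x$ of the variety $\{(x,z)\mid T(x,\cdot,z)=0\}$, obtaining
\[
\dim\{(x,z)\mid T(x,\cdot,z)=0\} = \max_i \dim\{x\mid \dim\{z\mid T(x,\cdot,z)=0\}=i\}+i.
\]
Using the constant-shift relation from the previous paragraph, the sets $\{x\mid \corank T(x,\cdot,\cdot)=i\}$ (viewed in the $y$-direction) and $\{x\mid \corank T(x,\cdot,\cdot)=i-(n_2-n_3)\}$ (viewed in the $z$-direction) coincide, so the two maxima differ by exactly $n_2-n_3$; since the ambient dimensions are $n_1+n_2$ versus $n_1+n_3$, the two codimensions are equal. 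Thus $\codim\{(x,y)\mid T(x,y,\cdot)=0\} = \codim\{(x,z)\mid T(x,\cdot,z)=0\}$. This handles the transposition swapping $y$ and $z$ (while fixing the role of $x$ as the ``projected'' coordinate).

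To get the full symmetric group $S_3$ on the roles of $x,y,z$, it suffices to also exhibit one more transposition — say swapping $x$ and $y$ — and then these two transpositions generate $S_3$. For that I would again invoke \cref{asmax}, but choosing the projection $(x,y)\mapsto y$ instead of $(x,y)\mapsto x$: the same theorem (with the roles of the first two coordinates interchanged, which is legitimate since \cref{asmax}'s proof only used that the fibers of the projection are linear subspaces) gives $\dim\{(x,y)\mid T(x,y,\cdot)=0\} = \max_i \dim\{y\mid \corank T(\cdot,y,\cdot)=i\}+i$, and comparing with the analogous expression for $\{(y,z)\mid T(\cdot,y,z)=0\}$ via the same kernel-counting argument in the $x$-versus-$z$ directions closes the loop. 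I expect the main obstacle to be purely bookkeeping: keeping straight which pair of directions plays the role of ``fiber coordinate'' and which plays ``base coordinate'' in each application of \cref{asmax}, and verifying that the constant shifts in the ambient dimensions exactly cancel the constant shifts in the fiber dimensions so that codimensions — not dimensions — come out equal. There is no new geometric input beyond \cref{asmax} and the elementary fact that a matrix and its transpose have the same rank; the work is in assembling these correctly.
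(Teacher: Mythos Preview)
Your proposal is correct and follows essentially the same route as the paper: apply \cref{asmax} twice (once for each of the two varieties sharing a common projection coordinate) and use that a matrix and its transpose have the same rank. The paper's execution is slightly cleaner because it uses the second conclusion of \cref{asmax}, namely $\GR(T)=\min_j \codim\{x\mid \matrixrank T(x,\cdot,\cdot)=j\}+j$, which is manifestly symmetric in the last two slots (rank does not see the direction), so no shift-tracking between coranks and ambient dimensions is needed; your $S_3$-generation discussion then collapses to ``similarly'' for the remaining equality.
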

	\begin{proof}
		We apply \cref{asmax} (ii) to $T$ and to $T$ after swapping $y$ and $z$ to get that the codimensions of $\{(x,y) \mid T(x,y,\cdot) = 0 \}$ and $\{(x,z) \mid T(x,\cdot,z) = 0 \}$ are equal to $\min_j \codim \{ x \mid \matrixrank T(x,\cdot,\cdot) = j \} + j$. This proves the first equality.
		The second equality is proven similarly.
		%
		%
	\end{proof}
	
	
	
	The above theorems and proofs (like all our proofs) generalize directly from 3-tensors to $k$-tensors $T$ for all $k\geq 3$. This gives for \cref{asmax} the generalization :
	\begin{enumerate}[label=\upshape(\roman*)]
		\item 
		$\dim \{ (x_1, \ldots, x_{k-1}) \mid T(x_1,\ldots, x_{k-1},\cdot) = 0\}\\[0.5em] 
		=\max_i \bigl(\dim \left\{(x_1, \ldots, x_{k-2}) \mid \corank T(x_1,\ldots, x_{k-2}, \cdot,\cdot) = i \right\} + i\bigr).$
		\item 
		$\GR(T) =\min_j \bigl(\codim \left\{(x_1, \ldots, x_{k-2}) \mid \matrixrank T(x_1, \ldots, x_{k-2},\cdot,\cdot) = j\right\} + j\bigr)$
		\end{enumerate}
		and for \cref{symmetry} the generalization
		\begin{align*}
			\GR(T) &= \codim \{ (x_2, x_3, \ldots, x_{k}) \mid T(\cdot, x_2, x_3, \ldots, x_{k}) = 0\}\\
			&= \codim \{ (x_1, x_3, \ldots, x_{k}) \mid T(x_1,\cdot,x_3,  \ldots, x_{k}) = 0\}\\
			& \hspace{0.5em}\vdots\\
			&= \codim \{ (x_1, x_2, \ldots, x_{k-1}) \mid T(x_1, x_2, \ldots, x_{k-1}, \cdot) = 0\}.
		\end{align*}
	
	\section{Geometric rank lies between subrank and slice rank}\label{sec:basicbounds}
	
	Recall that the subrank $\subrank(T)$ of $T$ is the largest number $s$ such that $I_s \leq T$ and the slice rank~$\SR(T)$ is the smallest number $r$ such that $T(x,y,z)$ can be written as a sum of $r$ trilinear maps of the form~$f(x)g(y,z)$ or $f(y)g(x,z)$ or $f(z)g(x,y)$.
	
	\begin{theorem}\label{theo:basic-bounds}
		For any tensor $T$,
		\[
		\subrank(T) \le \GR(T) \le \SR(T).
		\]
	\end{theorem}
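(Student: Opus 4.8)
The plan is to isolate three structural properties of $\GR$ and then obtain both inequalities as formal consequences: \textbf{(i)} subadditivity, $\GR(T_1+T_2)\le \GR(T_1)+\GR(T_2)$; \textbf{(ii)} $\GR$ is at most $1$ on any single ``slice'' tensor $f(x)g(y,z)$, $f(y)g(x,z)$ or $f(z)g(x,y)$; and \textbf{(iii)} monotonicity, $S\le T\Rightarrow \GR(S)\le \GR(T)$. Property (i) is immediate from the affine dimension theorem: the variety $\{(x,y):(T_1+T_2)(x,y,\cdot)=0\}$ contains $\{T_1(x,y,\cdot)=0\}\cap\{T_2(x,y,\cdot)=0\}$, and the codimension of an intersection is at most the sum of the codimensions (every variety in sight contains $(0,0)$, so nonemptiness is never an issue). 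For (ii) I would just compute the defining variety in each case: for $T=f(z)g(x,y)$ one has $T(x,y,\cdot)=(f(e_k)g(x,y))_k$, which vanishes on $\{g(x,y)=0\}$ when $f\neq 0$ and on everything when $f=0$, of codimension $\le 1$; for $T=f(x)g(y,z)$ one has $T(x,y,\cdot)=f(x)\,g(y,\cdot)$, so the variety is $\{f(x)=0\}\cup\{g(y,\cdot)=0\}$, again of codimension $\le 1$ since the first piece alone has codimension $\le 1$; the case $f(y)g(x,z)$ is symmetric.

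Combining (i) and (ii) gives the right-hand inequality: if $\SR(T)=r$, write $T=S_1+S_2+S_3$ where $S_i$ is a flattening of rank $r_i$ in the $i$-th direction, hence a sum of $r_i$ slice tensors of the $i$-th type, with $r_1+r_2+r_3=r$; subadditivity then yields $\GR(T)\le r=\SR(T)$. For the left-hand inequality I would combine (iii) with the computation $\GR(I_r)=r$. The latter is direct: $I_r(x,y,\cdot)=(x_ky_k)_{k=1}^{r}$, so its variety is the product over $k$ of the plane curves $\{x_ky_k=0\}$, each of dimension $1$; the variety therefore has dimension $r$ and $\GR(I_r)=2r-r=r$ (equivalently, $I_r=I_1^{\oplus r}$ and one checks $\GR$ is additive under direct sums the same way, the variety of a direct sum being the product of the two varieties). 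Granting (iii), if $\subrank(T)=s$ then $I_s\le T$, so $s=\GR(I_s)\le\GR(T)$.

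It remains to prove monotonicity, which I expect to be the only substantive step (everything else is formal). Write $S=(A,B,C)\cdot T$, so $S(x,y,z)=T(A^{T}x,\,B^{T}y,\,C^{T}z)$. Since $C^{T}z$ ranges only over the row space of $C$ as $z$ ranges over all of $\FF^{n_3}$, the variety $V_S:=\{(x,y):S(x,y,\cdot)=0\}$ \emph{contains} $L^{-1}(V_T)$, where $V_T:=\{(u,v):T(u,v,\cdot)=0\}$ and $L\colon(x,y)\mapsto(A^{T}x,B^{T}y)$; hence $\codim V_S\le \codim L^{-1}(V_T)$. Now for a linear map $L$ with image $W$ and $\kappa$-dimensional kernel, a choice of complement to $\ker L$ exhibits $L^{-1}(Z)\cong (Z\cap W)\times\FF^{\kappa}$ for any variety $Z$ meeting $W$, so $\codim L^{-1}(Z)=\codim_{W}(Z\cap W)$ using $\dim W+\kappa=n_1+n_2$; and intersecting a variety with a linear subspace drops the dimension by at most that subspace's codimension, so $\codim_{W}(V_T\cap W)\le\codim V_T=\GR(T)$ (again $(0,0)$ lies in all relevant sets). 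Chaining these inequalities gives $\GR(S)\le\GR(T)$. The only real work here is bookkeeping the ambient dimensions and invoking the standard facts (the affine dimension theorem, the behaviour of dimension under linear preimages), with no genuinely new difficulty. Alternatively, using that $\GR$ is invariant under permuting the three tensor axes (a consequence of \cref{symmetry}) one can first reduce to the case $B=C=I$, which further streamlines this last argument.
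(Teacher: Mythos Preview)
Your proposal is correct and follows the same high-level architecture as the paper: isolate monotonicity, subadditivity, the value on slice-rank-one tensors, and the value on $I_r$, then combine formally. The differences are in how you establish the individual pieces. For subadditivity the paper goes through $S+T\le S\oplus T$ together with direct-sum additivity and monotonicity, whereas you bypass the direct-sum detour and apply the affine dimension theorem directly to the intersection $\V(T_1)\cap\V(T_2)$; this is a nice shortcut, though strictly speaking one should note that the top-dimensional irreducible components of each $\V(T_i)$ are cones (the defining equations are bihomogeneous), so they all pass through the origin and the dimension theorem applies. For monotonicity the paper uses the symmetry result (\cref{symmetry}) to reduce to the trivial case $A=B=\id$, where $\V(T)\subseteq\V(S)$ in the same ambient space; your all-at-once argument via linear preimages and $\codim_W(V_T\cap W)\le\codim V_T$ is equally valid and avoids relying on symmetry, at the cost of one more invocation of the dimension theorem. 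Finally, the paper obtains $\GR(I_r)=r$ from direct-sum additivity, while your direct computation of the product variety $\prod_k\{x_ky_k=0\}$ is just as quick. In short, the two proofs are interchangeable; yours trades the dependence on \cref{symmetry} and direct-sum additivity for two clean uses of the affine/projective dimension theorem.
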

	
	Theorem~\ref{theo:basic-bounds} will follow from the following basic properties of $\GR$. We will give a more direct proof of the inequality $\GR(T) \leq \SR(T)$ in \cref{sec:moredirect}. Recall from the introduction that for any two tensors $S \in \FF^{m_1 \times m_2 \times m_3}$ and $T \in \FF^{n_1 \times n_2 \times n_3}$ we write~$S \leq T$ if there are matrices $A,B,C$ such that $S = (A,B,C)\cdot T$ where we define $(A,B,C) \cdot T \coloneqq (\sum_{a,b,c} A_{ia} B_{jb} C_{kc} T_{a,b,c})_{i,j,k}$.
	
	\begin{lemma}\label{lemma:GR-monotone}
		$\GR$ is $\leq$-monotone: if $S \leq T$, then $\GR(S) \leq \GR(T)$.
	\end{lemma}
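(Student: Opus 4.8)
The plan is to unwind the relation $S \le T$ at the level of trilinear maps and then track the induced linear map on the $(x,y)$-spaces. Recall that $S = (A,B,C)\cdot T$ is equivalent to
\[
S(x,y,z) = T(A^{\mathsf T}x,\; B^{\mathsf T}y,\; C^{\mathsf T}z)\qquad\text{for all }x,y,z,
\]
and since $\GR$ is defined through the algebraic closure we may assume $\FF$ is algebraically closed. I would introduce the auxiliary tensor $T' := (I,I,C)\cdot T$, so that $T'(x,y,z) = T(x,y,C^{\mathsf T}z)$, and let $\psi\colon \FF^{m_1}\times\FF^{m_2}\to\FF^{n_1}\times\FF^{n_2}$ be the linear map $(x,y)\mapsto(A^{\mathsf T}x,B^{\mathsf T}y)$. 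Writing $V_U := \{(x,y)\mid U(x,y,\cdot)=0\}$ for a tensor $U$, a direct computation shows $V_S = \psi^{-1}(V_{T'})$.

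The argument then splits into two independent parts. First, $\GR(T')\le\GR(T)$: the forms $T'(x,y,e_k)$ are among the values $T(x,y,C^{\mathsf T}z)$, so $V_T\subseteq V_{T'}$, and since $V_T$ and $V_{T'}$ sit in the same ambient space $\FF^{n_1}\times\FF^{n_2}$ this containment can only decrease the codimension. Second, I would prove the general statement: for a linear map $\psi\colon U\to W$ of affine spaces and a cone $Z\subseteq W$ through the origin — in our case $V_{T'}$, the common zero set of the bilinear forms $T'(x,y,e_k)$, hence cut out by homogeneous polynomials — one has $\codim_U\psi^{-1}(Z)\le\codim_W Z$. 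Factoring $\psi$ as the surjection $\rho\colon U\to\mathrm{im}\,\psi$ followed by the inclusion $\mathrm{im}\,\psi\hookrightarrow W$, we get $\psi^{-1}(Z) = \rho^{-1}(Z\cap\mathrm{im}\,\psi)$. Choosing a linear complement of $\ker\psi$ identifies $\rho^{-1}(Y)$ with $Y\times\ker\psi$ for every $Y\subseteq\mathrm{im}\,\psi$, hence $\codim_U\rho^{-1}(Y) = \codim_{\mathrm{im}\,\psi}Y$; and $\codim_{\mathrm{im}\,\psi}(Z\cap\mathrm{im}\,\psi)\le\codim_W Z$ by the standard fact that slicing a variety by one hyperplane drops the dimension of each component by at most one, applied $\codim_W\mathrm{im}\,\psi$ times to a top-dimensional component of $Z$. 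Combining the two parts, $\GR(S) = \codim V_S = \codim\psi^{-1}(V_{T'}) \le \codim V_{T'} = \GR(T') \le \GR(T)$.

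The delicate point is the last inequality in the preimage lemma: for a reducible $Z$ the hyperplane-slicing bound can fail unless some top-dimensional component of $Z$ actually meets the linear subspace $\mathrm{im}\,\psi$ and moreover stays nonempty after each intermediate slice. This is exactly where the cone structure of $V_{T'}$ is used: every irreducible component of $V_{T'}$ passes through the origin, the origin lies in $\mathrm{im}\,\psi$ and in every hyperplane through it, so all intermediate slices are nonempty and the dimension bound goes through. Aside from this, the argument is a formal unwinding of definitions together with elementary dimension theory.
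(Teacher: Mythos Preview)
Your proof is correct, but it follows a genuinely different route from the paper's. The paper handles only the easy case $(\id,\id,C)\cdot T$ directly (via the containment $V_T\subseteq V_{T'}$, exactly as you do for your $T'$), and then invokes the symmetry of geometric rank (\cref{symmetry}) to reduce the cases $(A,\id,\id)\cdot T$ and $(\id,B,\id)\cdot T$ back to that easy case; chaining the three inequalities gives the result. You instead avoid symmetry entirely: after peeling off the $C$-action, you treat the remaining $(A,B,\id)$-action in one shot by proving the preimage lemma $\codim_U\psi^{-1}(Z)\le\codim_W Z$ for linear $\psi$ and a cone $Z$, using iterated hyperplane slicing together with the observation that every irreducible component of a cone contains the origin. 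Your approach is more self-contained---it does not rely on \cref{symmetry}, whose proof in turn rests on the fiber-dimension analysis of \cref{asmax}---at the cost of having to argue the cone/hyperplane dimension bound carefully. The paper's approach is shorter once symmetry is in hand, and it highlights symmetry as the structural reason monotonicity holds in all three directions.
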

	
	\begin{proof} 
		
		Let $T \in \FF^{n_1 \times n_2 \times n_3}$. We claim that $\GR((\id, \id, C) \cdot T) \leq \GR(T)$ for any $C \in \FF^{m_3 \times n_3}$,
		where~$\id$ denotes an identity matrix of the appropriate size.
		From this claim and the symmetry of~$\GR$ (\cref{symmetry}), follows the inequalities $\GR((A,\id,\id)\cdot T) \leq T$ and $\GR((\id, B, \id)\cdot T) \leq \GR(T)$ for any matrices $A \in \FF^{m_1 \times n_1}$ and $B \in \FF^{m_2 \times n_2}$. Chaining these three inequalities gives that for any two tensors~$S$ and $T$, if $S \leq T$, then $\GR(S) \leq \GR(T)$.
		
		We prove the claim.
		Let $S = (\id, \id, C) \cdot T$.
		Let~$M_k = (T_{i,j,k})_{ij}$ be the 3-slices of~$T$ and let~$N_k = (S_{i,j,k})_{ij}$ be the 3-slices of~$S$.
		Since $S = (\id, \id, C) \cdot T$, the matrices $N_1,\ldots, N_{m_3}$ are in the linear span of the matrices $M_1, \ldots, M_{n_3}$.
		Thus $V = \{(x,y) \mid x^T M_1 y = \cdots = x^T M_{n_3} y = 0\}$ is a subset of $W = \{(x,y) \mid x^T N_1 y = \cdots = x^T N_{m_3} y = 0\}$. Therefore, $\dim V \leq \dim W$ and it follows that $\GR(S) = \codim W \leq \codim V = \GR(T)$.
	\end{proof}
	
	Let $T_1 \in \FF^{m_1 \times m_2 \times m_3}$ and $T_2 \in \FF^{n_1 \times n_2 \times n_3}$ be tensors with 3-slices $A_k = ((T_1)_{i,j,k})_{i,j}$ and $B_k = ((T_2)_{i,j,k})_{i,j}$ respectively.
	The direct sum $T_1 \oplus T_2 \in \FF^{(m_1 + n_1) \times (m_2 + n_2) \times (m_3 + n_3)}$
	is defined as the tensor with 3-slices
	$A_k \oplus 0_{n_1 \times n_2}$ for $k = 1, \ldots, m_3$ and
	$0_{m_1 \times m_2} \oplus B_k$
	for $k = m_3 + 1, \ldots, m_3 + n_3$ where $0_{a\times b}$ denotes the zero matrix of size $a\times b$. In other words, $T_1 \oplus T_2$ is the block-diagonal tensor with blocks $T_1$ and $T_2$.
	
	\begin{lemma}\label{lemma:GR-additive}
		$\GR$ is additive under direct sums: $\GR(T_1 \oplus T_2) = \GR(T_1)+\GR(T_2)$.
	\end{lemma}
	\begin{proof}
		Let $A_k$ be the 3-slices of $T_1$ and let $B_k$ be the 3-slices of~$T_2$.
		Let $T = T_1 \oplus T_2$ be the direct sum with 3-slices $M_k$.
		%
		Then
		\[
		V =
		\{(x,y) \mid T(x,y,\cdot) = 0\} = \{(x,y) \mid x^T M_1 y = \cdots = x^T M_{m_3+n_3} y = 0\}
		\]
		is the cartesian product of
		\[
		V_1 = \{(x,y) \mid x^T A_{1} y = \cdots = x^T A_{m_3} y = 0\}
		\]
		and
		\[
		V_2 = \{(x,y) \mid x^T B_{1} y = \cdots = x^T B_{n_3} y = 0\}.
		\]
		Thus $\dim V = \dim V_1 + \dim V_2$ \cite[page 138]{harris2013algebraic}. Therefore, 
		$\GR(T) = \GR(T_1)+\GR(T_2)$.
		%
		%
		%
	\end{proof}
	
	\begin{lemma}\label{lemma:GR-subadditive}
		$\GR$ is sub-additive under element-wise sums: $\GR(S+T) \le \GR(S)+\GR(T)$.
	\end{lemma}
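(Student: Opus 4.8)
The plan is to show that the variety attached to $S+T$ contains the intersection of the varieties attached to $S$ and to $T$, and then to bound the codimension of that intersection by the affine dimension theorem. Concretely, write $n = n_1 + n_2$ and set $V_S = \{(x,y) \in \FF^{n_1}\times\FF^{n_2} \mid S(x,y,\cdot)=0\}$, and similarly $V_T$ and $V_{S+T}$, so that $\GR$ of each summand is the codimension of the corresponding set. Since $(S+T)(x,y,z) = S(x,y,z)+T(x,y,z)$, any $(x,y)$ with $S(x,y,\cdot)=0$ and $T(x,y,\cdot)=0$ also satisfies $(S+T)(x,y,\cdot)=0$; hence $V_S \cap V_T \subseteq V_{S+T}$, so $\dim V_{S+T} \ge \dim(V_S\cap V_T)$ and therefore $\GR(S+T) = \codim V_{S+T} \le \codim(V_S \cap V_T)$. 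It then suffices to prove $\codim(V_S\cap V_T) \le \codim V_S + \codim V_T$.

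For this I would invoke the affine dimension theorem: if $Y, Z \subseteq \FF^n$ are irreducible varieties with $Y\cap Z \ne \emptyset$, then some irreducible component of $Y\cap Z$ has codimension at most $\codim Y + \codim Z$ (see, e.g., \cite{harris2013algebraic}; it follows from the cartesian-product and Krull facts about dimension, realizing $Y\cap Z$ as the intersection of $Y\times Z$ with the diagonal). To apply this to the possibly reducible $V_S$ and $V_T$, pick an irreducible component $A$ of $V_S$ with $\dim A = \dim V_S$ and an irreducible component $B$ of $V_T$ with $\dim B = \dim V_T$. Each defining equation $S(x,y,e_k) = \sum_{i,j} S_{i,j,k}\, x_i y_j$ is homogeneous (of bidegree $(1,1)$), so $V_S$ is a cone, hence each of its irreducible components is a cone; likewise for $V_T$. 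In particular $A$ and $B$ both contain the origin, so $A\cap B \ne \emptyset$. Then $A\cap B \subseteq V_S\cap V_T$ has an irreducible component of dimension at least $\dim A + \dim B - n = \dim V_S + \dim V_T - n$, whence $\dim(V_S\cap V_T) \ge \dim V_S + \dim V_T - n$, i.e. $\codim(V_S\cap V_T) \le \codim V_S + \codim V_T = \GR(S)+\GR(T)$. Combining with the reduction above gives $\GR(S+T) \le \GR(S)+\GR(T)$.

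The step I expect to be the crux is exactly the nonemptiness claim $A\cap B\ne\emptyset$: the affine dimension theorem controls codimension of an intersection only where the relevant components actually meet, and in $\FF^n$ two varieties of complementary dimension need not meet at all, so without further structure one cannot conclude $\codim(Y\cap Z)\le \codim Y+\codim Z$. The resolution is the observation that our varieties are cones, a consequence of the slices of a tensor giving bilinear (hence homogeneous) forms; this is the only ingredient beyond the textbook dimension theory already used in \cref{harrisfiber} and the surrounding discussion.
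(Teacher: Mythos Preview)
Your argument is correct, but it is a genuinely different route from the paper's. The paper observes that $S+T \leq S\oplus T$ (element-wise sum is a restriction of the direct sum), then invokes \cref{lemma:GR-monotone} and \cref{lemma:GR-additive} to get $\GR(S+T)\le\GR(S\oplus T)=\GR(S)+\GR(T)$ in one line. Your proof instead works directly with the varieties: you show $V_S\cap V_T\subseteq V_{S+T}$ and then bound $\codim(V_S\cap V_T)$ via the affine dimension theorem, using that the bilinear equations force the top-dimensional components to be cones meeting at the origin. What the paper's approach buys is brevity and modularity---it reduces sub-additivity to the two structural lemmas already needed elsewhere, and the same template works for any $\leq$-monotone, $\oplus$-additive parameter. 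What your approach buys is independence from \cref{symmetry} (which underlies the monotonicity proof) and a more transparent geometric picture: the inequality is literally an intersection-theoretic codimension bound, with the cone structure of $\V(T)$ supplying the otherwise-missing nonemptiness hypothesis. Both proofs are short; yours relies on a slightly heavier external fact (the affine dimension theorem), while the paper's relies on more internal machinery.
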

	\begin{proof} 
		Note that $S+T \leq S \oplus T$. Thus,
		$\GR(S+T) \le \GR(S \oplus T) = \GR(S) + \GR(T)$,
		where the inequality uses Lemma~\ref{lemma:GR-monotone}, and the equality uses~\cref{lemma:GR-additive}.
	\end{proof}
	
	\begin{lemma}\label{lemma:arankslice}
		If $\SR(T) = 1$ then $\GR(T) = 1$.
	\end{lemma}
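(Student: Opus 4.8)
The plan is as follows. If $\SR(T) = 1$ then $T \neq 0$ and $T$ is a single slice-rank-one term, so the associated trilinear form factors as a product of a nonzero linear form in one of the three variables and a nonzero bilinear form in the other two; say $T(x,y,z) = f(x)\,g(y,z)$ with $f(x) = \sum_i u_i x_i$ for a nonzero vector $u$ and $g$ a nonzero bilinear form. (The two other cases, where the linear factor is in $y$ or in $z$, will be handled identically using the corresponding equality in \cref{symmetry}; see below.)

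First I would use \cref{symmetry} to compute $\GR(T)$ by eliminating the variable carrying the linear factor, i.e.
\[
\GR(T) = \codim\{(y,z) \mid T(\cdot,y,z) = 0\}.
\]
Since $T(\cdot,y,z) = (T(e_i,y,z))_i = g(y,z)\,u$ and $u \neq 0$, we have $T(\cdot,y,z) = 0$ if and only if $g(y,z) = 0$; hence $\GR(T) = \codim\{(y,z) \mid g(y,z) = 0\}$. When the linear factor is in $y$ (resp.\ in $z$) one instead uses the equality $\GR(T) = \codim\{(x,z)\mid T(x,\cdot,z)=0\}$ (resp.\ $\GR(T) = \codim\{(x,y)\mid T(x,y,\cdot)=0\}$) from \cref{symmetry}, and the same computation again reduces everything to the vanishing locus of a nonzero bilinear form.

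It then remains to note that the zero set of a single nonzero polynomial --- here the nonzero bilinear form $g$ --- is a hypersurface of codimension exactly $1$: being a proper subvariety of the irreducible affine space its dimension drops by at least $1$, while by Krull's principal ideal theorem every irreducible component has codimension at most $1$; since the set is nonempty (it contains the origin), its codimension equals $1$. Therefore $\GR(T) = 1$, as claimed.

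The only thing requiring a little care is the three-way case split according to which variable carries the linear factor, but \cref{symmetry} dissolves this, since it lets us eliminate whichever of the three variables we wish; the codimension-$1$ statement for a hypersurface is a standard fact about dimension, of the same flavour as those already invoked in the paper, so I do not expect any real obstacle here.
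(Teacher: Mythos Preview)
Your proof is correct. The approach is close in spirit to the paper's but the execution differs slightly. The paper first reduces (implicitly via $\GL$-invariance of $\GR$, \cref{lemma:GR-monotone}) to a tensor in $\FF^{1\times n\times n}$ with one nonzero slice, and then observes that the $(x,y)$-variety contains the linear subspace $\{0\}\times\FF^n$, giving $\dim\geq n$ and hence $\GR(T)\leq 1$; nonzeroness of $T$ gives the matching lower bound. You instead invoke \cref{symmetry} to pass to the variety in the two variables \emph{not} carrying the linear factor, identify it with the zero set of a single nonzero bilinear form, and appeal to the standard hypersurface/Krull principal ideal fact to get codimension exactly~$1$. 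Your route avoids the coordinate reduction and makes the codimension-$1$ step explicit; the paper's route is terser and avoids naming Krull, at the cost of leaving the ``$\GR(T)\geq 1$ for $T\neq 0$'' step implicit. Both are equally valid one-line arguments.
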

	\begin{proof}
		It is sufficient to consider a tensor $T \in \FF^{1 \times n \times n}$ with one nonzero slice. Then we have that $T(0, \FF^n, \FF^n) = 0$, and so $\GR(T) = 1 + n - n = 1$.
	\end{proof}
	
	We note that \cref{lemma:arankslice} extends to higher-order tensors with slice rank replaced by partition rank. That is, for any $k$-tensor $T$, if $\PR(T) = 1$ then $\GR(T) = 1$. The proof for that is again simple. If $T \in \FF^{n_1 \times \cdots \times n_k}$ is a $k$-tensor of partition rank one, then as a multilinear polynomial we can write (up to permuting the $k$ variable vectors $x_i$) $T(x_1, \ldots, x_k) = f(x_1, \ldots, x_\ell) g(x_{\ell+1}, \ldots, x_k)$ for some multilinear polynomials $f$ and $g$ and some $1 \leq \ell \leq k-1$.
	Then
	\begin{align*}
		&\codim \{(x_1, \ldots, x_{k-1}) \mid T(x_1, \ldots, x_{k-1}, \cdot) = 0\}\\
		&\leq \codim \{(x_1, \ldots, x_{k-1}) \mid f(x_1, \ldots, x_\ell) = 0\} = 1
	\end{align*}
	since $\{(x_1, \ldots, x_{k-1}) \mid f(x_1, \ldots, x_\ell) = 0\}$ is precisely cut out by one equation, namely $f(x_1, \ldots, x_\ell) = 0$.

	
	\begin{lemma}\label{lemma:GR-I}
		For every $r \in \NN$ we have
		$\GR(I_r)=r$.
	\end{lemma}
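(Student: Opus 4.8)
The plan is to deduce this from the two facts about $\GR$ that have just been established, namely additivity under direct sums (\cref{lemma:GR-additive}) and the slice-rank-one lemma (\cref{lemma:arankslice}). First I would note that $I_1$ is the $1\times1\times1$ tensor whose single $3$-slice is the $1\times1$ matrix $(1)$, i.e.\ the trilinear map $(x,y,z)\mapsto xyz$; this is nonzero and visibly of the form $f(x)g(y,z)$, so $\SR(I_1)=1$ and hence $\GR(I_1)=1$ by \cref{lemma:arankslice}. Next, observe that $I_r$ is the block-diagonal tensor with $r$ blocks each equal to $I_1$, that is $I_r = I_1^{\oplus r}$. An $r$-fold application of \cref{lemma:GR-additive} then gives $\GR(I_r) = r\cdot\GR(I_1) = r$, completing the proof.

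For a self-contained alternative I would compute the relevant variety directly. The $k$-th $3$-slice of $I_r$ is $e_ke_k^T$, so $I_r(x,y,\cdot)=0$ is equivalent to the system $x_1y_1 = \cdots = x_ry_r = 0$. The variety $V = \{(x,y)\in\FF^r\times\FF^r : x_ky_k=0 \text{ for all } k\}$ is the union over all $S\subseteq[r]$ of the coordinate subspaces $L_S = \{(x,y) : x_k = 0\ (k\in S),\ y_k = 0\ (k\notin S)\}$: the inclusion $\supseteq$ is clear, and conversely any $(x,y)\in V$ lies in $L_S$ for $S = \{k : x_k = 0\}$. Each $L_S$ is a linear space with exactly one free coordinate for each index $k$, hence $\dim L_S = r$, so $\dim V = \max_S \dim L_S = r$ and $\GR(I_r) = \codim V = 2r - r = r$.

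Neither route has a genuine obstacle; the only point requiring a moment's care in the direct approach is verifying that $V$ is exactly the stated union of coordinate subspaces and that each has dimension $r$, both of which are elementary linear algebra. The additivity route avoids even this at the cost of invoking the preceding two lemmas, and is the one I would write up.
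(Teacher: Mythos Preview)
Your primary argument is correct and is exactly the paper's proof: $\SR(I_1)=1$ gives $\GR(I_1)=1$ via \cref{lemma:arankslice}, and then $I_r=I_1^{\oplus r}$ together with additivity under direct sums (\cref{lemma:GR-additive}) yields $\GR(I_r)=r$. Your self-contained alternative via the explicit decomposition of $\{(x,y):x_ky_k=0\ \forall k\}$ into the coordinate subspaces $L_S$ is also correct and is a pleasant direct check the paper does not include.
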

	\begin{proof}
		We have $\SR(I_1) = 1$ and so $\GR(I_1) = 1$ (\cref{lemma:arankslice}). Since $I_r$ is a direct sum of $r$ copies of $I_1$ and geometric rank is additive under taking the direct sum $\oplus$ (\cref{lemma:GR-subadditive}), we find $\GR(I_r) = r \GR(I_1) = r$.
	\end{proof}
	
	
	\begin{proof}[\bfseries\upshape Proof of Theorem~\ref{theo:basic-bounds}]
		We prove that $\GR(T) \leq \SR(T)$. Let $r=\SR(T)$. Then there are tensors $T_1, \ldots, T_r$ so that $T = \sum_{i=1}^r T_i$ and $\SR(T_i) = 1$. Then also $\GR(T_i) = 1$ (\cref{lemma:arankslice}).
		Subadditivity of $\GR$ under element-wise sums (\cref{lemma:GR-subadditive}) gives
		\[
		\GR(T) \le \sum_{i=1}^r \GR(T_i) = r = \SR(T).
		\]
		We prove that $\subrank(T) \leq \GR(T)$.
		Let $s=\subrank(T)$. Then $I_s \leq T$. We know $\GR(I_s) = s$ (\cref{lemma:GR-I}). By the $\leq$-monotonicity of $\GR$ (\cref{lemma:GR-monotone}), we have
		\[
		\subrank(T) = s = \GR(I_s) \leq \GR(T).\qedhere
		\]
	\end{proof}
	
	
	\section{Geometric rank is at least border subrank}\label{sec:border}
	
	In this section we extend the inequality $\subrank(T) \leq \GR(T)$ (\cref{theo:basic-bounds}) to the approximative version of subrank, called border subrank.
	To define border subrank we first define degeneration~$\degenleq$, which is the approximative version of restriction $\leq$.
	We write $S \degenleq T$, and we say $S$ is a \defin{degeneration} of $T$, if for some~$e\in \NN$ we have
	\[
	S + \eps S_1 + \eps^2 S_2 + \cdots + \eps^e S_e  = (A(\eps), B(\eps), C(\eps)) \cdot T
	\]
	for some tensors $S_i$ over $\FF$ and for some matrices $A(\eps), B(\eps), C(\eps)$ whose coefficients are Laurent polynomials in the formal variable $\eps$.
	%
	Equivalently, $S \degenleq T$ if and only if $S$ is in the orbit closure~$\overline{G \cdot T}$ where $G$ denotes the group $\GL_{n_1} \times \GL_{n_2} \times \GL_{n_3}$, $G\cdot T$ denotes the natural group action that we also used in the definition of $\leq$, and the closure is taken in the Zariski topology \cite[Theorem~20.24]{burgisser1997algebraic}. (When $\FF = \CC$ one may equivalently take the closure in the Euclidean topology.) Recall that the subrank of $T$ is defined as $\subrank(T) = \max \{n \in \NN \mid I_n \leq T\}$.
	The \defin{border subrank} of $T$ is defined as
	\[
	\bordersubrank(T) = \max \{n \in \NN \mid I_n \degenleq T\}.
	\]
	Clearly, $\subrank(T) \leq \bordersubrank(T)$.
	
	\begin{theorem}\label{border}
		For any tensor $T$,
		\[
		\bordersubrank(T) \leq \GR(T).
		\]
	\end{theorem}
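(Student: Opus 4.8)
The plan is to upgrade the inequality $\subrank(T) \leq \GR(T)$ from \cref{theo:basic-bounds} to border subrank by exploiting the algebraic-geometric nature of $\GR$. Concretely, I would show that for each fixed dimension triple and each $m \in \NN$, the set
\[
Z_m \coloneqq \{ T \in \FF^{n_1 \times n_2 \times n_3} \mid \GR(T) \leq m \}
\]
is Zariski-closed, i.e.\ $\GR$ is lower-semicontinuous. Granting this, the argument finishes quickly: if $\bordersubrank(T) = n$, then $I_n \degenleq T$, so $I_n$ lies in the orbit closure $\overline{G \cdot T}$. By $\leq$-monotonicity of $\GR$ (\cref{lemma:GR-monotone}), every tensor $S \in G\cdot T$ satisfies $\GR(S) = \GR(T)$ (the action is by invertible matrices, so $S \leq T$ and $T \leq S$), hence $G \cdot T \subseteq Z_{\GR(T)}$. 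Since $Z_{\GR(T)}$ is closed it contains $\overline{G\cdot T}$, so $\GR(I_n) \leq \GR(T)$. As $\GR(I_n) = n$ by \cref{lemma:GR-I}, we conclude $\bordersubrank(T) = n = \GR(I_n) \leq \GR(T)$.

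So the real content is proving that $Z_m$ is closed. I would approach this via the characterization in \cref{asmax}: writing $M_k(T) = (T_{i,j,k})_{i,j}$ for the $3$-slices, $\GR(T) \leq m$ iff the variety $V_T = \{(x,y) \mid x^T M_1(T) y = \cdots = x^T M_{n_3}(T) y = 0\}$ has dimension at least $n_1 + n_2 - m$. The defining equations of $V_T$ are bilinear in $(x,y)$ with coefficients that are linear — hence polynomial — in the entries of $T$. One then invokes a standard semicontinuity-of-fiber-dimension statement: for a family of varieties $\{V_T\}$ cut out inside a fixed ambient space by polynomials whose coefficients vary polynomially in a parameter $T$, the set of parameters for which the fiber has dimension $\geq d$ is Zariski-closed. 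This is a classical fact from elimination theory / the theory of constructible sets (e.g.\ it follows from the upper-semicontinuity of fiber dimension for the projection of the incidence variety $\{(T,x,y) \mid (x,y) \in V_T\}$, combined with properness after projectivizing in the $(x,y)$ variables, which is legitimate since all the defining polynomials are homogeneous in $(x,y)$). Passing to the projective closure in $(x,y)$ and using that $\dim V_T = \dim \PP(V_T) + 1$ (the variety is a cone) lets one apply the projective version where the projection is closed, which is exactly what makes the semicontinuity go through.

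The main obstacle — really the only delicate point — is justifying the semicontinuity of fiber dimension in the precise form needed, since the naive statement requires a \emph{proper} (e.g.\ projective) morphism and $V_T \subseteq \FF^{n_1} \times \FF^{n_2}$ is affine. The fix is to homogenize in $(x,y)$: because each $x^T M_k(T) y$ is already homogeneous (of bidegree $(1,1)$) in $(x,y)$, the variety $V_T$ is a cone, so it corresponds to a well-defined subvariety $\PP V_T$ of $\PP^{n_1 - 1} \times \PP^{n_2 - 1}$ (together with the origin), and $\dim V_T = \dim \PP V_T + 1$ whenever $V_T$ is not just the origin. Forming the incidence variety $\Gamma = \{(T, [x], [y]) \mid x^T M_k(T) y = 0 \ \forall k\} \subseteq \FF^{n_1 n_2 n_3} \times \PP^{n_1-1} \times \PP^{n_2 - 1}$ and projecting to the $T$-coordinate gives a projective (hence closed) morphism, so the upper-semicontinuity theorem for fiber dimension ([Har, Theorem 11.12] or Mumford) applies and yields that $\{T \mid \dim \PP V_T \geq d\}$ is closed; translating back through $\dim V_T = \dim \PP V_T + 1$ and $\GR(T) = n_1 + n_2 - \dim V_T$ shows $Z_m$ is closed. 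I would also record the edge case where $V_T$ reduces to the origin (then $\GR(T) = n_1 + n_2$, its maximal value, consistent with any $Z_m$ containing it only for $m \geq n_1 + n_2$) to make sure the dimension bookkeeping is airtight.
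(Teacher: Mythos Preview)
Your proposal is correct and follows essentially the same route as the paper: prove that $\{T : \GR(T)\le m\}$ is Zariski closed via upper-semicontinuity of fiber dimension for the projection of the incidence variety $\{(T,x,y) : T(x,y,\cdot)=0\}$ onto the $T$-coordinate, then combine this with $\GL$-invariance of $\GR$ to get $\degenleq$-monotonicity and conclude $\bordersubrank(T)=\GR(I_n)\le\GR(T)$. The paper cites the semicontinuity statement directly for zero sets of bi-homogeneous polynomials (\cref{harris}, a special case of Harris, Cor.~11.13), which spares it your explicit projectivization; if you do projectivize to $\PP^{n_1-1}\times\PP^{n_2-1}$, note that $V_T$ is a \emph{bi}-cone, so the correct shift is $\dim V_T=\dim\PP V_T+2$ (not $+1$), and the coordinate subspaces $\{0\}\times\FF^{n_2}$ and $\FF^{n_1}\times\{0\}$ sit in every $V_T$ without projecting to $\PP^{n_1-1}\times\PP^{n_2-1}$ --- both are easy bookkeeping fixes, not gaps in the argument.
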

	
	To prove \cref{border} 
	we use the following theorem on upper-semicontinuity of fiber dimension.
	
	
	\begin{theorem}[{\cite[special case of Corollary 11.13]{harris2013algebraic}}]\label{harris}
		Let $X$ be the zero set of bi-homogeneous polynomials, that is,
		\[
		X = \{(a,b) \in \FF^{m_1} \times \FF^{m_2} \mid f_1(a,b) = \cdots = f_k(a,b) = 0\}
		\]
		where the $f_i(a,b)$ are polynomials that are homogeneous in both $a$ and $b$.
		Let $\pi: X \to \FF^{m_2}$ map~$(a,b)$ to $b$.
		Let $Y = \pi(X)$ be its image.
		For any $q \in Y$, let $\lambda(q) = \dim(\pi^{-1}(q))$.
		Then $\lambda(q)$ is an upper-semicontinuous function of $q$, that is, the set $\{q \in Y \mid \lambda(q) \geq m\}$ is Zariski closed in $Y$.
	\end{theorem}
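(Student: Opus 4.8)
The statement is, in the affine-cone language used throughout this paper, a restatement of the classical semicontinuity of fiber dimension for projective morphisms; the plan is to reduce it to that fact by projectivizing in the first factor. First, since each $f_i$ is homogeneous in $a$, we may assume every $f_i$ has positive degree in $a$: any $f_i$ of $a$-degree $0$ depends on $b$ alone and already vanishes identically on $Y$, so removing it from the defining equations leaves every fiber $\pi^{-1}(q)$ with $q \in Y$ unchanged. The remaining $f_i$ descend to bihomogeneous forms cutting out a closed subvariety $\tilde X \subseteq \Proj^{m_1-1} \times \FF^{m_2}$, and we write $\tilde\pi : \tilde X \to \FF^{m_2}$ for the restriction of the projection onto the second factor. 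Since $\Proj^{m_1-1}$ is complete, the projection $\Proj^{m_1-1} \times \FF^{m_2} \to \FF^{m_2}$ is proper, hence a closed map, and so is its restriction $\tilde\pi$; in particular $\tilde\pi(\tilde X)$ is Zariski closed in $\FF^{m_2}$.

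Next I would translate fiber dimensions across this projectivization. For $q \in \FF^{m_2}$ the fiber $\pi^{-1}(q) \subseteq \FF^{m_1}$ is an affine cone, namely the affine cone over $\tilde\pi^{-1}(q) \subseteq \Proj^{m_1-1}$, with the origin always a point of it; hence $\dim \pi^{-1}(q) = 1 + \dim \tilde\pi^{-1}(q)$ when $\tilde\pi^{-1}(q) \neq \emptyset$ and $\dim \pi^{-1}(q) = 0$ otherwise. Consequently, for $m \geq 2$ the locus $\{q \in Y \mid \lambda(q) \geq m\}$ equals $\{q \in Y \mid \dim \tilde\pi^{-1}(q) \geq m-1\}$; for $m = 1$ it equals $Y \cap \tilde\pi(\tilde X)$, which is closed in $Y$ by the previous paragraph; and for $m \leq 0$ it is all of $Y$. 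It therefore remains to prove that $q \mapsto \dim \tilde\pi^{-1}(q)$ is upper-semicontinuous on $\FF^{m_2}$, that is, that $\{q \mid \dim \tilde\pi^{-1}(q) \geq d\}$ is closed for every $d$.

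For this I would first reduce to the case $\tilde X$ irreducible: writing $\tilde X$ as a finite union of irreducible components, the fiber dimension of $\tilde\pi$ at $q$ is the maximum of the fiber dimensions of the restrictions of $\tilde\pi$ to those components, and a finite maximum of upper-semicontinuous functions is upper-semicontinuous. For $\tilde X$ irreducible, the generic fiber dimension $e = \dim \tilde X - \dim \overline{\tilde\pi(\tilde X)}$ is a lower bound for every nonempty fiber and is attained on a dense open subset of $\overline{\tilde\pi(\tilde X)}$ — this is the dimension-of-fibers theorem behind \cref{harrisfiber} — so the locus where the fiber dimension strictly exceeds $e$ is contained in $\tilde\pi(\tilde X')$ for some proper closed $\tilde X' \subsetneq \tilde X$ with $\dim \overline{\tilde\pi(\tilde X')} < \dim \overline{\tilde\pi(\tilde X)}$; this image is again closed because $\tilde\pi$ is a closed map, and the argument finishes by Noetherian induction on $\dim \overline{\tilde\pi(\tilde X)}$.

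The main obstacle is this last step: running it cleanly requires the semicontinuity of fiber dimension on the \emph{source} of a projective morphism together with the fact that projective morphisms are closed maps — precisely the material behind \cref{harrisfiber} — which is why the theorem is quoted from Harris rather than reproved from first principles. The genuine content of the plan lies in the first two paragraphs: the reduction from the affine-cone formulation used in this paper to the standard projective statement, including the passage to $\Proj^{m_1-1}$, the matching of fiber dimensions, and the direct treatment of the small values of $m$.
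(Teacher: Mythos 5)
The paper gives no proof of \cref{harris} at all: it is imported verbatim as a special case of Corollary 11.13 in Harris's book, so there is no internal argument to compare yours against. Your sketch is a correct reconstruction of the standard proof behind that citation, together with the one piece of content the citation leaves implicit, namely the translation from the paper's affine bihomogeneous-cone formulation to the projective statement: discarding the $f_i$ of $a$-degree $0$ (which indeed vanish on $Y$, so fibers over $Y$ are unaffected and $Y$ itself is their zero locus), projectivizing only in the $a$-variables, and using $\lambda(q)=1+\dim\tilde\pi^{-1}(q)$ for nonempty projective fibers versus $\lambda(q)=0$ when the fiber is just the origin, with the cases $m\le 0$, $m=1$, $m\ge 2$ matched correctly (note homogeneity in $b$ is never needed for this direction). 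The remaining core — semicontinuity of $q\mapsto\dim\tilde\pi^{-1}(q)$ via completeness of $\Proj^{m_1-1}$ (so $\tilde\pi$ is closed), reduction to irreducible $\tilde X$, the theorem on generic fiber dimension underlying \cref{harrisfiber}, and Noetherian induction on the dimension of the image — is exactly the argument in Harris/Shafarevich, as you acknowledge. One small precision worth adding in the inductive step: containment of the bad locus in $\tilde\pi(\tilde X')$ is not quite enough by itself; you should take $\tilde X'=\tilde\pi^{-1}\bigl(\overline{\tilde\pi(\tilde X)}\setminus U\bigr)$, so that for every $q$ outside the good open set $U$ the \emph{entire} fiber $\tilde\pi^{-1}(q)$ lies in $\tilde X'$, and hence for $d>e$ the locus $\{q \mid \dim\tilde\pi^{-1}(q)\ge d\}$ coincides with the corresponding locus for $\tilde X'$, to which the induction applies. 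With that refinement the plan is complete and consistent with the cited source.
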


	\begin{lemma}\label{aranklim} $\GR$ is lower-semicontinuous: for any $n_i,m \in \NN$ the set $\{T \in \FF^{n_1 \times n_2 \times n_3} \mid \GR(T) \leq m\}$ is Zariski closed.
	\end{lemma}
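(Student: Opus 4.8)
The plan is to realize $\{T \in \FF^{n_1 \times n_2 \times n_3} \mid \GR(T) \le m\}$ as a level set of a fiber-dimension function and then invoke \cref{harris}. First I would form the incidence variety
\[
X = \{((x,y),T) \in (\FF^{n_1}\times\FF^{n_2})\times\FF^{n_1\times n_2\times n_3} \mid T(x,y,\cdot) = 0\},
\]
which is cut out by the $n_3$ polynomial equations $f_k((x,y),T) = \sum_{i\in[n_1]}\sum_{j\in[n_2]} T_{i,j,k}\,x_i y_j = 0$. The key structural point is that if we group $x$ and $y$ into a single block of variables $a=(x,y)$ and take $b=T$ to be the block of the $n_1 n_2 n_3$ tensor coordinates, then each $f_k$ is homogeneous of degree $2$ in $a$ and homogeneous of degree $1$ in $b$; hence the bi-homogeneity hypothesis of \cref{harris} is met with $m_1 = n_1+n_2$ and $m_2 = n_1 n_2 n_3$.

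Next I would analyze the projection $\pi\colon X \to \FF^{n_1\times n_2\times n_3}$ sending $((x,y),T)$ to $T$. Since $((0,0),T)\in X$ for every tensor $T$, the map $\pi$ is surjective, so its image $Y = \pi(X)$ is the whole ambient space and ``Zariski closed in $Y$'' simply means ``Zariski closed''. For fixed $T$ the fiber $\pi^{-1}(T)$ is precisely the linear space $\{(x,y)\mid T(x,y,\cdot)=0\}$, so, in the notation of \cref{harris},
\[
\lambda(T) = \dim\pi^{-1}(T) = \dim\{(x,y)\mid T(x,y,\cdot)=0\} = (n_1+n_2) - \GR(T),
\]
by the definition of geometric rank (\cref{grdef}). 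Applying \cref{harris} with the threshold $d = n_1+n_2-m$ then shows that $\{T \mid \lambda(T)\ge d\} = \{T \mid \GR(T)\le m\}$ is Zariski closed, which is exactly the statement of the lemma.

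I do not anticipate a genuine obstacle here: the argument is a direct application of upper-semicontinuity of fiber dimension once the correct incidence variety has been identified. The one point that must be handled with care --- and the reason this does not reduce to a vague ``$\GR(T)\le m$ is cut out by finitely many polynomials'' --- is that the defining equations of $X$ must be bi-homogeneous with respect to the particular partition $\{(x,y)\}\sqcup\{T\}$ of the variables, with $x$ and $y$ placed together in one block; this is exactly the shape demanded by \cref{harris}, and it follows immediately from the fact that the slices of $T$ are bilinear in $(x,y)$ and linear in the entries of $T$.
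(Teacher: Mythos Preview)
Your proposal is correct and follows essentially the same argument as the paper: form the incidence variety of pairs $((x,y),T)$ with $T(x,y,\cdot)=0$, project to the tensor coordinate, and apply \cref{harris} (upper-semicontinuity of fiber dimension) to conclude that $\{T \mid \GR(T)\le m\}$ is Zariski closed. One small slip: the fiber $\pi^{-1}(T)=\{(x,y)\mid T(x,y,\cdot)=0\}$ is not a linear space but a variety cut out by bilinear forms; this does not affect the argument, since all you use is that its dimension equals $(n_1+n_2)-\GR(T)$ by definition.
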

	\begin{proof}
		We define the set
		\[
		X = \{(T,x,y) \in \FF^{n_1 \times n_2 \times n_3} \times \FF^{n_1} \times \FF^{n_2} \mid T(x,y, \cdot) = 0 \}.
		\]
		Let $\pi : X \to \FF^{n_1 \times n_2 \times n_3}$ map $(T,x,y)$ to $T$.
		Let $Y = \pi(X) = \FF^{n_1 \times n_2 \times n_3}$ be the image of $\pi$.
		For any~$T \in Y$ let $\lambda(T) \coloneqq \dim(\pi^{-1}(T))$.
		Then $\lambda(T)$ is an upper-semicontinuous function of $T$ in the Zariski topology on $Y$ by \cref{harris}.
		This means that the set $\{T \in \FF^{n_1 \times n_2 \times n_3} \mid \lambda(T) \geq m\}$ is closed for every $m \in \NN$. It follows that $\{ T \in \FF^{n_1 \times n_2 \times n_3} \mid \GR(T) \leq m \}$ is closed for every $m \in \NN$.
	\end{proof}
	
	\begin{remark}
		A well-known example of a lower-semicontinuous function is matrix rank.
		Indeed, the set of matrices of rank at most $m$ is the zero set of the determinants of all $(m+1) \times (m+1)$ submatrices. For geometric rank we do not know an explicit set of generators for the vanishing ideal of $\{T \in \FF^{n_1 \times n_2 \times n_3} \mid \GR(T) \leq m\}$. For slice rank the set $\{T \in \FF^{n_1 \times n_2 \times n_3} \mid \SR(T) \leq m\}$ is also known to be Zariski closed and explicit vanishing polynomials for this variety were recently obtained by Bläser, Ikenmeyer, Lysikov, Pandey and Schreyer~\cite{blser2019variety}.
	\end{remark}
	
	\begin{lemma}\label{arankdegenmon}
		$\GR$ is $\degenleq$-monotone: if $S\degenleq T$, then $\GR(S) \leq \GR(T)$.
	\end{lemma}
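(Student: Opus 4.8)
The plan is to combine $\degenleq$-monotonicity-friendly features of $\GR$ that are already established, namely lower-semicontinuity (\cref{aranklim}) and ordinary $\leq$-monotonicity (\cref{lemma:GR-monotone}). Suppose $S \degenleq T$. By the orbit-closure characterization of degeneration recalled just before \cref{border}, $S$ lies in the Zariski closure $\overline{G \cdot T}$, where $G = \GL_{n_1} \times \GL_{n_2} \times \GL_{n_3}$ acts in the usual way (after padding $S$ and $T$ with zeros to a common format, which does not change $\GR$ since it only appends identically-zero slices). Set $m = \GR(T)$. Every tensor $T' \in G \cdot T$ is of the form $(A,B,C)\cdot T$ with $A,B,C$ invertible, hence $T' \leq T$ and also $T \leq T'$, so by \cref{lemma:GR-monotone} we get $\GR(T') = \GR(T) = m$; in particular $G \cdot T$ is contained in the set $Z_m := \{ U \mid \GR(U) \leq m\}$.

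The key step is then to invoke \cref{aranklim}: $Z_m$ is Zariski closed. Since $G \cdot T \subseteq Z_m$ and $Z_m$ is closed, we also have $\overline{G \cdot T} \subseteq Z_m$. Therefore $S \in \overline{G\cdot T} \subseteq Z_m$, which is exactly the statement $\GR(S) \leq m = \GR(T)$.

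I expect the only real subtlety to be bookkeeping about formats: degeneration $S \degenleq T$ as defined via Laurent-polynomial matrices $A(\eps),B(\eps),C(\eps)$ allows $S$ and $T$ to have different sizes, whereas the orbit-closure picture and \cref{aranklim} are phrased for a fixed ambient space $\FF^{n_1\times n_2 \times n_3}$. This is handled by the standard device of embedding both tensors into a common space by zero-padding; one must check (trivially, from the definition, e.g.\ as in the proof of \cref{lemma:arankslice}) that appending zero slices in any direction leaves $\GR$ unchanged, and that $S\degenleq T$ is preserved under this embedding. Once that is in place, the argument is immediate from \cref{lemma:GR-monotone} and \cref{aranklim}. (Alternatively, one can avoid orbit closures entirely: plug the Laurent-polynomial matrices directly into the definition, clear denominators in $\eps$, and observe that $\GR((A(\eps),B(\eps),C(\eps))\cdot T) \le \GR(T)$ for all but finitely many $\eps$ by \cref{lemma:GR-monotone}, so the generic fiber bound together with lower-semicontinuity of $\GR$ in the limit $\eps \to 0$ yields $\GR(S) \le \GR(T)$; this is really the same proof with the topology made explicit.)
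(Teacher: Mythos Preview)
Your proof is correct and follows exactly the paper's approach: use \cref{lemma:GR-monotone} to see that $\GR$ is constant on the orbit $G\cdot T$, invoke \cref{aranklim} to conclude the sublevel set $\{T' \mid \GR(T') \le \GR(T)\}$ is Zariski closed, and deduce that it contains $\overline{G\cdot T}$ and hence $S$. The only addition is your (harmless) remark about zero-padding to a common format, which the paper leaves implicit.
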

	\begin{proof}
		For all $g \in G$ we have $\GR(g\cdot T) = \GR(T)$ by \cref{lemma:GR-monotone}.
		The set $\{T' \mid \GR(T') \leq \GR(T)\}$ is Zariski closed by \cref{aranklim}.
		It contains the orbit $G\cdot T$ and hence also its Zariski closure~$\overline{G\cdot T}$, that is,
		\[
		\{T' \mid T' \degenleq T \} = \overline{G \cdot T} \subseteq \{T' \mid \GR(T') \le \GR(T) \}.
		\]
		Therefore, $\GR(S) \leq \GR(T)$.
	\end{proof}
	
	\begin{proof}[\bfseries\upshape Proof of \cref{border}]
		Let $n = \bordersubrank(T)$. Then $I_n \degenleq T$ by the definition of $\bordersubrank$, and so $n \leq \GR(T)$ by \cref{arankdegenmon}. This proves the claim.
	\end{proof}

	\section{The border subrank of matrix multiplication}\label{sec:matmult}
	
	In the context of constructing fast matrix multiplication algorithms, Strassen \cite[Theorem 6.6]{strassen1987relative} proved that for any positive integers $e \leq h \leq \ell$ the border subrank of the matrix multiplication tensor $\langle e,h,\ell\rangle$ is lower bounded by
	\begin{equation}\label{strineq}
	\bordersubrank(\langle e,h,\ell\rangle) \geq \begin{cases}
	eh - \floor{\frac{(e + h - \ell)^2}{4}} & \textnormal{if } e + h \geq \ell,\\
	eh & \textnormal{otherwise}.
	\end{cases}
	\end{equation}
	Here $\langle e,h,\ell\rangle$ is the tensor that corresponds to taking the trace of the product of an $e\times h$ matrix, an $h \times \ell$ matrix and an $\ell \times e$ matrix.
	We prove using the geometric rank that this lower bound is optimal.
	
	\begin{theorem}\label{stropt}
		For any positive integers $e \leq h \leq \ell$
		\[
		\bordersubrank(\langle e,h,\ell\rangle) = \GR(\langle e,h,\ell\rangle) = \begin{cases}
		eh - \floor{\frac{(e + h - \ell)^2}{4}} & \textnormal{if } e + h \geq \ell,\\
		eh & \textnormal{otherwise}.
		\end{cases}
		\]
		In particular,
		we have $\bordersubrank(\langle m,m,m\rangle) = \GR(\langle m,m,m\rangle) =\ceil{\tfrac34m^2}$ for any $m \in \NN$.
	\end{theorem}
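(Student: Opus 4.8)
The plan is to compute $\GR(\langle e,h,\ell\rangle)$ exactly; the theorem then follows by sandwiching, since \cref{border} gives $\bordersubrank(\langle e,h,\ell\rangle)\le\GR(\langle e,h,\ell\rangle)$ and Strassen's inequality \eqref{strineq} gives a matching lower bound on $\bordersubrank(\langle e,h,\ell\rangle)$, so all three quantities must coincide. (An upper bound on $\GR$ would already suffice, but the exact computation is no harder.) The first step is to rewrite the defining variety of $\GR$ in matrix language. Viewing $\langle e,h,\ell\rangle$ as the trilinear map $(X,Y,Z)\mapsto\tr(XYZ)$ on $\FF^{e\times h}\times\FF^{h\times\ell}\times\FF^{\ell\times e}$, nondegeneracy of the trace pairing gives that $\tr(XYZ)=0$ for all $Z$ if and only if the $e\times\ell$ matrix $XY$ vanishes. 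Hence
\[
\GR(\langle e,h,\ell\rangle)=\codim\bigl\{(X,Y)\in\FF^{e\times h}\times\FF^{h\times\ell}\mid XY=0\bigr\},
\]
so, writing $V$ for this variety, everything reduces to computing $\dim V$.

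Next I would apply \cref{asmax} with the first slot playing the role of $X$. For fixed $X$, the fiber $\{Y\mid XY=0\}$ is the linear space of matrices whose columns lie in $\ker X$, which is irreducible of dimension $\ell(h-\rk X)$; thus \cref{asmax} applies and gives
\[
\dim V=\max_{0\le r\le e}\Bigl(\dim\{X\in\FF^{e\times h}\mid \rk X=r\}+\ell(h-r)\Bigr),
\]
the rank ranging over $0,\dots,\min(e,h)=e$. Using the standard fact that the rank-exactly-$r$ locus in $\FF^{e\times h}$ is irreducible of dimension $r(e+h-r)$, this equals $\max_{0\le r\le e}f(r)$ with $f(r)=-r^2+(e+h-\ell)r+h\ell$, so that $\GR(\langle e,h,\ell\rangle)=eh+h\ell-\max_{0\le r\le e}f(r)$.

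The last step is the elementary optimization of the downward parabola $f$ over the integers $r\in\{0,\dots,e\}$. If $e+h<\ell$, the vertex $(e+h-\ell)/2$ is negative, so $f$ decreases on $[0,e]$, its maximum is $f(0)=h\ell$, and $\GR(\langle e,h,\ell\rangle)=eh$. If $e+h\ge\ell$, one checks (using $h\le\ell$) that the vertex lies in $[0,e]$, so the maximum over integers is attained at $\floor{(e+h-\ell)/2}$ or the next integer, with value $\floor{(e+h-\ell)^2/4}+h\ell$ regardless of the parity of $e+h-\ell$; hence $\GR(\langle e,h,\ell\rangle)=eh-\floor{(e+h-\ell)^2/4}$. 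Specializing $e=h=\ell=m$ gives $m^2-\floor{m^2/4}=\ceil{3m^2/4}$, as claimed.

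None of the individual steps is deep: the real leverage comes from \cref{asmax}, which reduces $\dim V$ to a base dimension plus a fiber dimension on each stratum. The only things that need genuine care are checking that \cref{asmax} applies cleanly here — identifying the ``$X$''-strata with the rank loci of $X$ and recalling their dimensions $r(e+h-r)$ — and the routine-but-error-prone parabola arithmetic, in particular getting the floor function right for both parities of $e+h-\ell$ and confirming the vertex lies in the feasible range. I expect this bookkeeping, rather than any conceptual obstacle, to be the main thing to nail down.
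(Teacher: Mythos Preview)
Your proposal is correct and follows essentially the same argument as the paper: both reduce to the variety $\{(X,Y)\mid XY=0\}$ via the trace pairing, invoke \cref{asmax} to stratify by $\rk X$, plug in the dimensions $r(e+h-r)$ and $\ell(h-r)$, and then optimize the resulting quadratic in $r$ over $\{0,\dots,e\}$. The only cosmetic difference is that the paper separates out the constant $h\ell$ before optimizing, writing $\GR(T)=eh-\max_r r(\Delta-r)$ with $\Delta=e+h-\ell$, which is the same calculation.
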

	\begin{proof}
		Since $\bordersubrank(\langle e,h,\ell\rangle) \leq \GR(\langle e,h,\ell\rangle)$ (\cref{border}) and since we have the lower bound in \eqref{strineq}, it suffices to show that $\GR(\langle e,h,\ell\rangle)$ is at most $eh - \floor{(e + h - \ell)^2/4}$ if $e + h \geq \ell$ and at most $eh$ otherwise.
		
		Let $T = \langle e, h, \ell\rangle$.
		Let $V = \{(x,y) \in \FF^{e h} \times \FF^{h\ell} \mid T(x,y,\cdot)=0\}$.
		Then $\GR(T) = eh + h\ell - \dim V$.
		From \cref{asmax} it follows that
		\begin{equation}\label{Vmax}
		\dim V = \max_i \dim \{x \in \FF^{eh} \mid \dim \{y \in \FF^{h\ell} \mid T(x,y,\cdot) = 0\} = i\} + i.
		\end{equation}
		We now think of $\FF^{eh}$, $\FF^{h\ell}$ and $\FF^{\ell e}$ as the matrix spaces $\FF^{e \times h}$, $\FF^{h \times \ell}$ and $\FF^{\ell \times e}$. Then $T$ gives the trilinear map
		$T : \FF^{e \times h} \times \FF^{h \times \ell} \times \FF^{\ell \times e}  \to \FF : (X,Y,Z) \mapsto \Tr(XYZ)$.
		Therefore, $T(X,Y,\cdot) = 0$ if and only if $XY = 0$.
		If the rank of $X$ as an $e \times h$ matrix equals $r$, then
		\[\dim \{Y \in  \FF^{h \times \ell} \mid T(X,Y,\cdot) =  0\} = (h - r)\ell,\]
		since $Y$ is any matrix with columns from $\ker(X)$.
		We have
		\[\dim\{X \in \FF^{e \times h} \mid \matrixrank(X) = r\} = er + (h-r)r,\]
		since rank $r$ matrices can be parametrized (up to permutation of the columns) by choosing the first~$r$ columns independently, which contributes $er$ parameters, and choosing the last $h-r$ columns as linear combinations of the first $r$ columns, which contributes $(h-r)r$ parameters.
		Thus the relevant values of $i$ in \eqref{Vmax} are of the form $i = (h-r)\ell$ and we have that
		\[
		\dim V = \max_r \dim \{X \in \FF^{e \times h} \mid \matrixrank X = r\} + (h-r)\ell = \max_r er + (h-r)r + (h-r)\ell = \max_r f(r) + h\ell
		\]
		where $f(r) = r(\D-r)$ with $\D:=e+h-\ell$.
		Thus,
		\[\GR(T) = eh - \max_r f(r).\]
		Over the integers, the function $f$ attains its maximum at $\floor{\frac{\D}{2}}$ (and at $\ceil{\frac{\D}{2}}$), but this may be outside the interval $[0, e]$ that we want to maximize over. Recall that $e \le h \le \ell$. 
		Observe that if $\D \ge 0$ then 
		$e=\D+(\ell-h) \ge \D \ge 0$, 
		so $\floor{\frac{\D}{2}}$ is in the interval $[0,e]$.
		On the other hand, if $\D \le 0$ then $r(\D-r) \le 0 = f(0)$ for every $r \ge 0$, so the maximum of $f(r)$ over $r \in [0,e]$ is at the endpoint $r=0$.
		We thus find
		\begin{equation}
		\max_{0 \le r \le e} f(r) =
		\begin{cases}
		\floor{\frac{\D}{2}}\ceil{\frac{\D}{2}} = \floor{\frac{\D^2}{4}} & \textnormal{if } \D \ge 0,\\
		0 & \textnormal{otherwise}.
		\end{cases}
		\end{equation}
		This completes the proof.
	\end{proof}

	\begin{remark}
		\cref{stropt} gives the upper bound $\subrank(\langle m,m,m \rangle) \leq \bordersubrank(\langle m,m,m \rangle) = \ceil{\tfrac34 m^2}$ on the subrank of matrix multiplication $\subrank(\langle m,m,m \rangle)$. This improves the previously best known upper bound $\subrank(\langle m,m,m \rangle) \leq m^2 - m$
		from \cite[Equation 29]{christ2018tensor}.
	\end{remark}

	\begin{remark}\label{notsubm}
		Geometric rank $\GR$ is not sub-multiplicative under the tensor Kronecker product~$\otimes$. We give an example. The matrix multiplication tensor $\langle m,m,m\rangle$ can be written as the product $\langle m, m, m \rangle =  \langle m, 1, 1 \rangle \otimes \langle 1,m,1\rangle \otimes \langle 1, 1, m\rangle$ and $\GR(\langle m, 1, 1\rangle) = \GR(\langle 1, m, 1\rangle) = \GR(\langle 1, 1, m\rangle) = 1$
		whereas we have $\GR(\langle m, m, m\rangle) = \ceil{\tfrac34m^2}$ by \cref{stropt}.
	\end{remark}
	
	\begin{remark}\label{notqbar}
		Geometric rank $\GR$ is not the same as subrank $\subrank$ or border subrank $\bordersubrank$.
		For example, for the trilinear map $W(x_1, x_2, y_1, y_2, z_1, z_2) = x_1 y_1 z_2 + x_1 y_2 z_1 + x_2 y_1 z_1$ we find $\GR(W) = 2$ (see the example in the introduction), whereas $\subrank(W) = \bordersubrank(W) = 1$. The latter follows from the fact that $\asympsubrank(W) = 1.81...$~\cite{strassen1991degeneration}, where $\asympsubrank(T) \coloneqq \lim_{n\to\infty} \subrank(T^{\otimes n})^{1/n}$ is the asymptotic subrank of $T$, since $\bordersubrank(T) \leq \asympsubrank(T)$ \cite{strassen1987relative}. 
	\end{remark}
	
	\begin{remark}\label{remarkasymp}
		Geometric rank $\GR$ is not super-multiplicative under the tensor Kronecker product~$\otimes$. Here is an example.
		Let $\regularize{\SR}(T) \coloneqq \lim_{n\to\infty} \SR(T^{\otimes n})^{1/n}$ and let $\regularize{\GR}(T) \coloneqq \lim_{n\to\infty} \GR(T^{\otimes n})^{1/n}$, whenever these limits are defined.
		From the fact that $\subrank(T) \leq \GR(T) \leq \SR(T)$ and the fact that $\asympsubrank(W) = \regularize{\SR}(W) = 1.81...$ \cite{MR4495838}
		it follows that $\regularize{\GR}(W) = 1.81..$, whereas $\GR(W) = 2$.
		We conclude that~$\GR$ is not super-multiplicative.
		We have seen already in \cref{notsubm} that $\GR$ is not sub-multiplicative.
	\end{remark}
	
	\begin{remark}\label{notsr}
		Geometric rank $\GR$ is not the same as slice rank $\SR$.  For example, for the matrix multiplication tensor $\langle m,m,m\rangle$ we find that $\GR(\langle m, m, m\rangle) = \ceil{\tfrac34m^2}$ (\cref{stropt}), whereas it was known that $\SR(\langle m,m,m\rangle) = m^2$ \cite[Remark 4.9]{MR3631613}.
	\end{remark}

	\section{Geometric rank versus slice rank}\label{sec:moredirect}
	
	In \cref{sec:basicbounds} 
	we proved, by chaining the basic properties of geometric rank, that geometric rank is at most slice rank, that is, 
	$\GR(T) \leq \SR(T)$.
	What is the largest gap between $\GR(T)$ and~$\SR(T)$? Motivated by this question, and motivated by the analogous question for analytic rank instead of geometric rank that we discussed in the introduction, we give a direct proof of the inequality $\GR(T) \leq \SR(T)$.
	
	In fact, we prove a chain of inequalities $\GR(T) \leq \ZR(T) \leq \SR(T)$ where $\ZR(T)$ is defined as follows.
	We will henceforth use the following piece of notation for a tensor $T \in \F^{n_1 \times n_2 \times n_3}$;
	\begin{equation}\label{eq:VT-notation}
	\V(T) = \{(x,y) \in \FF^{n_1} \times \FF^{n_2} \mid \forall z \in \FF^{n_3} : T(x,y,z) = 0\} .
	\end{equation}
	Moreover, we use the following standard notation for the variety cut out by polynomials $f_1, \ldots, f_s$;
	\begin{equation}\label{eq:V-notation}
	\V(f_1, \ldots, f_s) = \{ x \mid f_1(x)=\cdots=f_s(x)=0 \}.
	\end{equation}
	Let $\FF[\b{x},\b{y}] = \FF[x_1, \ldots, x_{n_1}, y_1, \ldots, y_{n_2}]$
	and let $\FF[\b{x},\b{y},\b{z}] = \FF[x_1, \ldots, x_{n_1}, y_1, \ldots, y_{n_2}, z_1, \ldots, z_{n_3}]$.
	We denote by $\FF[\b{x},\b{y}]_{\{(0,1), (1,0), (1,1)\}} \subseteq \FF[\b{x},\b{y}]$ the subset of polynomials that are bi-homogeneous of bi-degree $(0,1)$, $(1,0)$ or $(1,1)$. That is, the set~$\FF[\b{x},\b{y}]_{\{(0,1), (1,0), (1,1)\}}$ contains the polynomials
	%
	in $\FF[x_1, \ldots, x_{n_1}]$ that are homogeneous of degree~1, and the polynomials in $\FF[y_1, \ldots, y_{n_2}]$ that are homogeneous of degree~1, and the polynomials in~$\FF[\b{x},\b{y}]$ that are homogeneous of degree $1$ in $x_1, \ldots, x_{n_1}$ and homogeneous of degree 1 in $y_1, \ldots, y_{n_2}$.
	For any tensor $T$ we define
	\[
	\ZR(T) = \min\bigl\{s \in \NN \mid \exists f_1,\ldots,f_s \in \FF[\b{x},\b{y}]_{\{(0,1), (1,0), (1,1)\}} : \V(f_1,\ldots,f_s) \sub \V(T) \bigr\}.
	\]

	\begin{theorem}\label{moredirect}
		Let $T$ be a tensor.
		Then $\GR(T) \le \ZR(T) \le \SR(T)$.
	\end{theorem}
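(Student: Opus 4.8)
The plan is to prove the two inequalities separately: $\ZR(T) \le \SR(T)$ is bookkeeping about slice-rank-one tensors, while $\GR(T) \le \ZR(T)$ is an application of Krull's height theorem together with monotonicity of dimension under inclusion.

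First I would prove $\ZR(T) \le \SR(T)$. Write $T = T_1 + \cdots + T_s$ with $s = \SR(T)$ and each $T_i$ of slice rank one (we may discard any zero summands, which only decreases $s$). Then, as a trilinear map, each $T_i$ has one of the forms $T_i(x,y,z) = u_i(x)\,g_i(y,z)$, or $T_i(x,y,z) = u_i(y)\,g_i(x,z)$, or $T_i(x,y,z) = u_i(z)\,g_i(x,y)$, where the $u_i$ are linear forms and the $g_i$ are bilinear forms. To the $i$-th summand I attach a single polynomial $f_i \in \FF[\b{x},\b{y}]_{\{(0,1),(1,0),(1,1)\}}$: in the first case $f_i = u_i(x)$, which is bi-homogeneous of bidegree $(1,0)$; in the second case $f_i = u_i(y)$, of bidegree $(0,1)$; and in the third case $f_i = g_i(x,y)$, of bidegree $(1,1)$. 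In each case one checks the implication that $f_i(x,y) = 0$ forces $T_i(x,y,z) = 0$ for all $z$ (in the first two cases because a factor vanishes; in the third because $g_i(x,y) = 0$ kills $T_i(x,y,\cdot)$ entirely). Hence if $(x,y) \in \V(f_1,\ldots,f_s)$, then $T(x,y,z) = \sum_i T_i(x,y,z) = 0$ for every $z$, i.e.\ $(x,y) \in \V(T)$. This shows $\V(f_1,\ldots,f_s) \subseteq \V(T)$, so $\ZR(T) \le s = \SR(T)$.

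Next I would prove $\GR(T) \le \ZR(T)$. Let $s = \ZR(T)$ and fix $f_1,\ldots,f_s \in \FF[\b{x},\b{y}]_{\{(0,1),(1,0),(1,1)\}}$ with $\V(f_1,\ldots,f_s) \subseteq \V(T)$. Since each $f_i$ is bi-homogeneous it vanishes at the origin of $\FF^{n_1}\times\FF^{n_2}$, so $\V(f_1,\ldots,f_s)$ is nonempty; Krull's height theorem (equivalently, iterating the principal ideal theorem) then bounds the codimension of every irreducible component of $\V(f_1,\ldots,f_s)$ in $\FF^{n_1+n_2}$ by $s$, hence $\codim \V(f_1,\ldots,f_s) \le s$. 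On the other hand $\V(f_1,\ldots,f_s) \subseteq \V(T)$ gives $\dim \V(f_1,\ldots,f_s) \le \dim \V(T)$ (a maximal chain of irreducible subvarieties of the smaller variety is also such a chain in the larger one, so this holds even though both varieties may be reducible), equivalently $\codim \V(T) \le \codim \V(f_1,\ldots,f_s)$. Chaining these, $\GR(T) = \codim \V(T) \le \codim \V(f_1,\ldots,f_s) \le s = \ZR(T)$.

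The step I expect to be the crux is the codimension estimate $\codim \V(f_1,\ldots,f_s) \le s$: Krull's height theorem is exactly what converts ``cut out by $s$ equations'' into ``codimension at most $s$'', and it genuinely requires the variety to be nonempty, which is precisely why the bi-homogeneity of the $f_i$ (so that the origin lies in $\V(f_1,\ldots,f_s)$) is built into the definition of $\ZR$. Everything else — the case analysis for slice-rank-one summands and the monotonicity of dimension under inclusion — is routine, so I would keep that part terse.
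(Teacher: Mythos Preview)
Your proof is correct and follows essentially the same route as the paper's: both arguments extract one polynomial $f_i \in \FF[\b{x},\b{y}]_{\{(0,1),(1,0),(1,1)\}}$ from each slice-rank-one summand to witness $\ZR(T)\le\SR(T)$, and both combine the containment $\V(f_1,\ldots,f_s)\subseteq\V(T)$ with the bound $\codim\V(f_1,\ldots,f_s)\le s$ to get $\GR(T)\le\ZR(T)$. The only difference is that you spell out the justification for $\codim\V(f_1,\ldots,f_s)\le s$ via Krull's height theorem and the nonemptiness coming from bi-homogeneity, whereas the paper simply asserts this inequality.
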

	\begin{proof}
		We prove that $\ZR(T) \le \SR(T)$.
		Let $r=\SR(T)$.
		View $T$ as a polynomial $T \in \FF[\b{x},\b{y},\b{z}]$.
		We write $T = \sum_{i=1}^r T_i$ with $\SR(T_i)=1$ for every $i$.
		Then $T_i = f_ig_i$ for some $f_i \in \FF[\b{x},\b{y}]_{\{(0,1), (1,0), (1,1)\}}$ and $g_i \in \FF[\b{x},\b{y},\b{z}]$.
		We claim that $\V(f_1,\ldots,f_r) \sub \V(T)$. Indeed, if $(x,y) \in \V(f_1,\ldots,f_r)$, then $T_i(x,y,z)=0$ for every $i$ and every $z$, and therefore $T(x,y,z)=0$ for every $z$.
		We conclude that $\ZR(T) \le r = \SR(T)$.
		
		We prove that $\GR(T) \le \ZR(T)$.
		Let $s=\ZR(T)$.
		Then there are $f_1, \ldots, f_s \in \FF[\b{x},\b{y}]_{\{(0,1), (1,0), (1,1)\}}$ such that
		$\V(f_1,\ldots,f_s) \sub \V(T)$. 
		We have
		\begin{equation*}\label{eq:GR<SR}
		\GR(T) = \codim \V(T) \le \codim \V(f_1,\ldots,f_s) \le s = \ZR(T),
		\end{equation*}
		where the first inequality follows from the containment $\V(f_1,\ldots,f_s) \sub \V(T)$ which implies the inequality $\dim \V(f_1,\ldots,f_s) \le \dim \V(T)$.
	\end{proof}

	\section{Geometric rank as liminf of analytic rank}\label{sec:ARGR}

	Let $\PP \sub \NN$ denote the set of primes numbers.
	Recall that for a tensor $T$ over $\ZZ$ and a prime $p \in \PP$, we denote by $T_p$ the $3$-tensor over $\FF_p$ obtained by reducing all coefficients of $T$ modulo $p$.
	In this section we prove the following tight relationship between $\AR(T_p)$ and $\GR(T)$,
	
	\begin{theorem}\label{theo:liminf}
		For every tensor $T$ over $\ZZ$ we have 
		\[
		\liminf_{p \in \PP} \AR(T_p) = \GR(T).
		\]
	\end{theorem}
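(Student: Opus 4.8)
The plan is to establish the two inequalities $\liminf_p \AR(T_p) \le \GR(T)$ and $\liminf_p \AR(T_p) \ge \GR(T)$ separately, using the point-count reformulation stated in the introduction, namely
\[
\AR(T_p) = n_1 + n_2 - \log_p \bigl| \{ (x,y) \in \FF_p^{n_1} \times \FF_p^{n_2} : T_p(x,y,\cdot) = 0 \} \bigr|.
\]
Writing $V = \V(T)$ for the affine variety over $\overline{\QQ}$ (equivalently over $\ZZ$) cut out by the bilinear equations $T(x,y,e_k) = 0$, and $V_p$ for its reduction mod $p$, the quantity $\AR(T_p)$ becomes $n_1 + n_2 - \log_p |V_p(\FF_p)|$, so the claim is equivalent to
\[
\limsup_{p\to\infty} \log_p |V_p(\FF_p)| = n_1 + n_2 - \GR(T) = \dim V.
\]
So the whole statement reduces to a clean number-theoretic assertion: \emph{for a variety defined over $\ZZ$, the limsup over primes of $\log_p$ of its number of $\FF_p$-points equals its dimension.} I would isolate this as a standalone lemma and prove the two directions.

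For the upper bound $\limsup_p \log_p |V_p(\FF_p)| \le \dim V$, the key input is the Lang--Weil bound (or, more elementarily for the upper bound alone, a Bézout/Schwartz--Zippel-type estimate): there is a constant $C = C(V)$ such that $|V_p(\FF_p)| \le C\, p^{\dim V}$ for all large primes $p$, whence $\log_p |V_p(\FF_p)| \le \dim V + \log_p C \to \dim V$. Here one must be slightly careful that the dimension of the reduction $V_p$ does not jump up: for all but finitely many primes $p$, the reduction $V_p$ has the same dimension as $V$ (this is standard flatness/constructibility over $\mathrm{Spec}\,\ZZ$), and for those finitely many bad primes the crude bound $|V_p(\FF_p)| \le p^{n_1+n_2}$ is irrelevant to the limsup since it is taken over $p \to \infty$. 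This direction is routine.

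The harder direction is the lower bound $\limsup_p \log_p |V_p(\FF_p)| \ge \dim V$, i.e.\ producing infinitely many primes $p$ with $|V_p(\FF_p)| \ge c\, p^{\dim V}$ for some $c>0$. The natural tool is again Lang--Weil, but it applies cleanly only to a \emph{geometrically irreducible} variety over $\FF_p$; in general $V$ may be reducible and its components may only be defined over an extension of $\QQ$ and may not stay irreducible after reduction. The strategy is: pick a top-dimensional irreducible component $W$ of $V$ (over $\overline{\QQ}$), let $K$ be a number field over which $W$ is defined and geometrically irreducible, and choose a prime $\mathfrak{p}$ of $K$ of residue degree $1$ over a rational prime $p$ that is large, of good reduction for $W$, and for which $W_{\mathfrak p}$ remains geometrically irreducible of dimension $\dim W$ — by Chebotarev (or just the fact that a positive-density set of primes splits completely in the field of definition of the geometric components) there are infinitely many such $p$. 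For those $p$, Lang--Weil gives $|W_{\mathfrak p}(\FF_p)| = (1+o(1))\, p^{\dim W}$, and since $\FF_\mathfrak{p} = \FF_p$ and $W_\mathfrak{p} \subseteq V_p$ we get $|V_p(\FF_p)| \ge (1-o(1)) p^{\dim V}$, as desired. The main obstacle is precisely this descent/irreducibility bookkeeping — keeping track of fields of definition of components, discarding the finitely many primes of bad reduction, and invoking Chebotarev to guarantee infinitely many primes over which geometric irreducibility is preserved. Everything else is packaging. I would remark, as the theorem's own discussion anticipates, that this argument genuinely produces a $\liminf$ equality only because we take $\limsup$ of the point count along a density-positive (but not cofinite) set of primes; over the "bad" primes the point count can be smaller, which is why replacing $\liminf$ by $\lim$ is not available by this method.
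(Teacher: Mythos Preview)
Your proposal is correct and follows essentially the same route as the paper: rewrite $\AR(T_p)$ as a point count, get the upper bound from a Schwartz--Zippel/B\'ezout estimate once dimension is preserved under reduction, and get the lower bound by picking a top-dimensional irreducible component, descending to its field of definition, and applying Lang--Weil along a positive-density set of primes where that component is defined over $\FF_p$ and stays geometrically irreducible. The only cosmetic differences are in packaging---the paper proves an ``Extended Bertini--Noether'' theorem where you invoke generic flatness/constructibility, and the paper reaches the density-of-good-primes statement via the Primitive Element Theorem plus a result of Berend--Bilu where you invoke Chebotarev directly---but these are equivalent inputs for the same argument.
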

	
	The proof of Theorem~\ref{theo:liminf} is easily obtained from the following result, which determines the dimension of a variety using the number of rational points in prime finite fields.\footnote{Point counting has also been used to determine Betti numbers, as in~\cite{batyrev_1999}.} 
	
	\begin{lemma}\label{prop:dim-limsup}
		For every variety $V$ defined over $\ZZ$ we have
		\[\dim V = \limsup_{p \in \PP} \log_p |V(\FF_p)|.\] 
	\end{lemma}
	
	Indeed, Theorem~\ref{theo:liminf} would follow from Lemma~\ref{prop:dim-limsup} together with the identity~(\ref{eq:AR-identity}) showing that the analytic rank can be written in terms of the number of $\FF_p$-points of the algebraic variety $\V(T_p)$ (recall~(\ref{eq:VT-notation})): 
	for any tensor $T \in \ZZ^{n_1 \times n_2 \times n_3}$,
	\[
	\AR(T_p) = n_1 + n_2 -\log_p \abs[0]{\V(T_p)(\FF_p)}
	= n_1 + n_2 -\log_p \abs[0]{\V(T)(\FF_p)} .
	\]
	
	

	
	For the proof of Lemma~\ref{prop:dim-limsup} we will use several auxiliary results.
	The first result is a version of the Lang--Weil Theorem~\cite{LangWe54} (see~\cite{Tao12}, Corollary 4).
	
	\begin{theorem}[Lang--Weil bound, alternate form]\label{theo:LW}
		Let $\FF$ be a finite field.
		For every variety $V$ defined over~$\overline{\FF}$,
		$$|V(\FF)| = (c(V)+O_V(|\FF|^{-1/2}))|\FF|^{\dim V} $$
		where $c(V)$ is the number of top-dimensional irreducible components of $V$ defined over $\FF$.
	\end{theorem}
	
	We note that the notation $O_V(\cdot)$ represents a constant independent of $|\FF|$, instead depending only on the polynomials cutting out $V$ (their number, their degrees, and their number of variables).
	
	The following lemma can be obtained from a result of Cassels~\cite{cassels_1976} (also implicit in Lech~\cite{Lech_1953})
	on $p$-adic integers (by embedding the $p$-adic integers $\ZZ_p$ into the finite field $\FF_p$).

	\begin{lemma}\label{lemma:roots-modp}
		%
		For every finite set of algebraic integers $S$ there are infinitely many prime numbers $p$ for which there is a homomorphism from $\ZZ[S]$ to $\F_p$.
	\end{lemma}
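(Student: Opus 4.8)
The plan is to realize $\ZZ[S]$ as a subring of the ring of integers of a number field and then reduce it modulo a degree-one prime. First I would set $K = \QQ(S)$, the number field generated by the finitely many algebraic integers in $S$, and let $\mathcal{O}_K$ denote its ring of integers. Since algebraic integers form a ring and every element of $S$ is an algebraic integer lying in $K$, we have $\ZZ[S] \subseteq \mathcal{O}_K$. Hence it suffices to exhibit, for a positive density of primes $p$, a ring homomorphism $\mathcal{O}_K \to \F_p$: restricting such a map to $\ZZ[S]$ gives the homomorphism required by the lemma.

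Next I would produce these homomorphisms from the splitting behaviour of rational primes in $\mathcal{O}_K$. Let $L$ be the Galois closure of $K/\QQ$; all but the finitely many primes dividing the discriminant of $L$ are unramified in $L$. By the Chebotarev density theorem (the older Frobenius density theorem already suffices here), the set of primes $p$ whose Frobenius conjugacy class in $\mathrm{Gal}(L/\QQ)$ is trivial---equivalently, the primes splitting completely in $L$, and hence in $K$---has density $1/[L:\QQ] > 0$. For such a $p$, choosing any prime ideal $\mathfrak{p}$ of $\mathcal{O}_K$ above $p$, complete splitting forces $\mathfrak{p}$ to have residue degree $1$, so $\mathcal{O}_K/\mathfrak{p} \cong \F_p$, and the composite
\[
\ZZ[S] \hookrightarrow \mathcal{O}_K \twoheadrightarrow \mathcal{O}_K/\mathfrak{p} \cong \F_p
\]
is a ring homomorphism. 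This holds for a positive density of $p$, which is the claim.

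The hard part is essentially only the appeal to a prime-density theorem guaranteeing a positive proportion of primes $p$ for which $\mathcal{O}_K$ has a residue-degree-one factor; the rest is bookkeeping. If one wants to avoid invoking Chebotarev directly, an alternative route fixes a primitive element $\alpha$ for $K/\QQ$ with monic minimal polynomial $f \in \ZZ[x]$, writes each $\beta \in S$ as $g_\beta(\alpha)$ with $g_\beta \in \ZZ[1/N][x]$ for a common denominator $N$, and notes that sending $\alpha$ to a root of $f$ modulo $p$ defines a homomorphism $\ZZ[S] \to \F_p$ whenever $p \nmid N$ and $f$ has a root mod $p$; the Frobenius density theorem (applied to the permutation action of the Galois group on the roots of $f$, where the identity---and more generally any element with a fixed point---contributes) then shows that the set of such primes has positive density. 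Either way the lemma reduces to a standard statement in algebraic number theory.
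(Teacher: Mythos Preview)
Your argument is correct. The route, however, differs from the paper's. The paper chooses a primitive element $\alpha$ for $\QQ(S)/\QQ$, writes each $\alpha_i\in S$ as $f_i(\alpha)$ with $f_i\in\QQ[x]$, and then invokes a result of Berend--Bilu guaranteeing that the minimal polynomial of $\alpha$ has a root $a_p$ modulo $p$ for a positive density of primes; for $p$ large enough to clear the denominators appearing in the $f_i$, the assignment $\alpha_i\mapsto f_i(a_p)\pmod p$ extends to the desired homomorphism $\ZZ[S]\to\FF_p$. Your main argument instead embeds $\ZZ[S]$ into the full ring of integers $\mathcal{O}_K$ and reduces modulo a degree-one prime produced by Chebotarev (or Frobenius) density applied to the Galois closure. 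This is arguably cleaner---there is no need to track denominators of the $f_i$---at the cost of importing a bit more algebraic-number-theory infrastructure (rings of integers, splitting of primes). Your ``alternative route'' paragraph is essentially the paper's proof, with the Frobenius density theorem standing in for the cited Berend--Bilu statement; these are equivalent here, since the latter is precisely the assertion that a nonconstant $f\in\ZZ[x]$ has a root modulo a positive density of primes, which follows from Frobenius density applied to a splitting field of $f$.
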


	Finally, we will need the following two results which control the behavior of a variety under the application of a homomorphism on polynomials cutting it out.
	Henceforth, for a ring homomorphism $\psi \colon R \to R'$,
	if $f \in R[\b{x}]$ is a polynomial with coefficients in $R$ then $\psi(f) \in R'[\b{x}]$ is obtained by applying~$\psi$ on the coefficients of $f$; 
	moreover, for a subset $S \sub R$, denote $\psi(S) = \langle \psi(f) \,\vert\, f \in S \rangle \sub R'$,
	where $\langle \cdot \rangle$ stands for the generated ideal in the given ring containing the generating elements.

	
	\begin{theorem}[Bertini--Noether]\label{theo:BN}
		Let $f_1,\ldots,f_m \in R[\mathbf{x}]$, where $R$ is an integral domain, such that $X=\V(f_1,\ldots,f_m)$ is irreducible.
		There exists a nonzero $c \in R$ such that for every homomorphism $\psi \colon R \to \K$ into a field $\K$, if $\psi(c) \neq 0$ then $\V(\psi(f_1),\ldots,\psi(f_m))$ is irreducible of dimension $\dim X$.
	\end{theorem}
	
	\begin{lemma}\label{lemma:decomposition-homo}
		Let $\psi \colon R \to R'$ be a homomorphism of integral domains.
		If $V=\bigcup_i V_i$ is a finite union of varieties,
		$V=\V(S)$ and $V_i = \V(S_i)$ with $S,S_i \sub R[\b{x}]$, 
		then $\V(\psi(S)) = \bigcup_i \V(\psi(S_i))$.
		%
		%
	\end{lemma}
	
	Both results above can be obtained from scheme-theoretic results of Grothendieck (see Corollary~9.2.6.2 and Proposition~9.7.8 in~\cite{EGA-IV3}).
	Alternatively, Theorem~\ref{theo:BN} appears as Proposition~10.4.2 in~\cite{FieldArithmetic}, 
	and for Lemma~\ref{lemma:decomposition-homo} we include a proof below for completeness.
	%
	
	For a variety $V \sub \overline{\FF}^n$, 
	the \emph{ideal of $V$} is the ideal $\I(V) = \{f \in \overline{\FF}[x_1,\ldots,x_n] \mid f(p)=0 \, \forall p \in V\}$.
	For an ideal $I$ of a ring $R$, the radical of $I$ (in $R$) is the ideal $\sqrt{I} = \{f \in R \,\vert\, \exists n \in \N \colon f^n \in I \} \sub R$.
	We will use the following version of Hilbert's Nullstellensatz (see, e.g., Proposition 9.4.2 in~\cite{FieldArithmetic}).
	
	\begin{theorem}[Hilbert's Nullstellensatz]\label{theo:Hilbert}
		Let $\FF$ be any field.
		For every variety $V=\V(I)$, with $I$ an ideal of $\FF[\b{x}]$, we have
		$\I(V) \cap \FF[\b{x}] = \sqrt{I}$.
	\end{theorem}
	
	
	\begin{proof}[Proof of Lemma~\ref{lemma:decomposition-homo}]
		Note that $\psi(T) = \psi(\langle T \rangle)$ for a subset $T \sub R[\b{x}]$.
		Denote $I = \langle S \rangle$ and $I_i = \langle S_i \rangle$ (ideals of $R[\b{x}]$). 
		By Theorem~\ref{theo:Hilbert}, 
		$\sqrt{I} 
		= \sqrt{\prod_i I_i}$
		(as
		$\I(V) \cap R[\b{x}] 
		= \bigcap_i \I(V_i) \cap R[\b{x}] 
		= \bigcap_i \sqrt{I_i} = \sqrt{\bigcap_i I_i}$);		
		we need to prove
		$\sqrt{\psi(I)\overline{\KK}[\b{x}]} 
		= \sqrt{\prod_i \psi(I_i)\overline{\KK}[\b{x}]}$,
		where $\KK$ is the quotient field of $R'$.
		Let $\overline{\psi} \colon R \to \overline{\KK}$ be obtained from $\psi$ by embedding $R'$ in $\overline{\KK}$.
		Then
		\begin{center}
			$\sqrt{\prod_i \overline{\psi}(I_i)}
			= \sqrt{\overline{\psi}(\prod_i I_i)}
			= \sqrt{\overline{\psi}\big(\sqrt{\prod_i I_i}\big)}
			= \sqrt{\overline{\psi}\big(\sqrt{I}\big)}
			= \sqrt{\overline{\psi}(I)}$
		\end{center}	
		as we needed to prove, 
		where the second and last equalities follow from the identity
		$\sqrt{\phi(J)}=\sqrt{\phi(\sqrt{J})}$
		for an arbitrary ring homomorphism $\phi \colon R \to R_0$ and ideal $J$ of $R$---which we verify  next.
		For the containment $\supseteq$, if $p \in \sqrt{\phi(J)}$ then $p^n \in \phi(J) \sub \phi(\sqrt{J})$ for some $n\in\NN$, hence $p \in \sqrt{\phi(\sqrt{J})}$.
		For the reverse containment $\subseteq$, 
		if $p \in \sqrt{\phi(\sqrt{J})}$ then $p^n = \phi(f)$ for some $n \in \NN$ and $f \in R$, where $f^{k} \in J$ for some $k \in \NN$;
		thus, $p^{nk} = \phi(f)^k = \phi(f^k) \in \phi(J)$,
		that is, $p \in \sqrt{\phi(J)}$. 
		This completes the proof.
	\end{proof}

	Having collected the necessary auxiliary results, we now prove the main result of this section.
	
	\begin{proof}[Proof of Lemma~\ref{prop:dim-limsup}]
		For a homomorphism $\psi\colon\ZZ \to \KK$, we denote by $\psi(V)$
		the variety cut out by applying $\psi$ on the set---guaranteed by the statement---of polynomials over $\ZZ$ cutting out $V$.
		For a prime number $p$, let $f_p \colon \ZZ \to \FF_p$ be the mod-$p$ homomorphim.
		%
		First, we claim that $\dim f_p(V) = \dim V$ for all but finitely many $p$.
		For any such $p$, this would imply by Theorem~\ref{theo:LW} that
		$$|V(\FF_p)| = |f_p(V)(\FF_p)| \le O_V(p^{\dim f_p(V)}) = O_V(p^{\dim V}),$$
		which proves that $\limsup_{p \in \PP} \log_p |V(\FF_p)| \le \dim V$.
		The claim can be obtained from scheme-theoretic results of Grothendieck (see Corollary~9.2.6.2 in~\cite{EGA-IV3}), or alternatively from Theorem~\ref{theo:BN} as follows. 
		Let $A$ be the ring of algebraic integers, and for every prime $p$, let $g_p:A \to \overline{\FF_p}$ be an extension of $f_p$ to $A$.
		For any irreducible component $X$ of $V$ we have, 
		by Theorem~\ref{theo:BN} applied on $I:=\I(X) \cap A[\b{x}]$, that the variety $g_p(X):=\V(g_p(I))$ is irreducible of dimension $\dim X$, for all but finitely many $p$. 
		Let $V = \bigcup_i X_i$ be the decomposition of $V$ into irreducible components. By Lemma~\ref{lemma:decomposition-homo}, 
		$g_p(V) = \bigcup_i g_p(X_i)$ is the decomposition of $g_p(V)$ into irreducible components, for all but finitely many $p$,
		in which case $\dim g_p(V) = \max_i \dim g_p(X_i) = \max_i \dim X_i = \dim V$.
		Since $f_p(V)=g_p(V)$, our claim follows.
		
		
		Next, fix a top-dimensional irreducible component $X$ of $V$, and let $R$ be a ring such that $\I(X)$ is generated by polynomials over $R$. Note that $R$ can be taken to be an extension ring $R=\ZZ[S]$ for some finite set of algebraic numbers $S$ (to obtain $S$, take any finite generating set of $\I(X)$, clear denominators, and let $S$ be set of all their coefficients). 
		By Lemma~\ref{lemma:roots-modp}, there is a homomorphism $h_p:R \to \FF_p$ for infinitely many $p$.
		By Theorem~\ref{theo:BN} applied on $I:=\I(X) \cap R[\b{x}]$, the variety $h_p(X):=\V(h_p(I))$ is irreducible of dimension $\dim X$ $(= \dim V)$ for infinitely many $p$.
		Importantly, Lemma~\ref{lemma:decomposition-homo} implies $h_p(X) \sub h_p(V)$. 
		For any $p$ as above,
		apply Theorem~\ref{theo:LW} to obtain
		$$|V(\FF_p)| = |h_p(V)(\FF_p)| \ge |h_p(X)(\FF_p)| \ge p^{\dim V}(1 - o_V(1)),$$
		which proves that $\limsup_{p \in \PP} \log_p |V(\FF_p)| \ge \dim V$.
		This completes the proof.
	\end{proof}

	\section{Open problems}

	We discuss several open problems about geometric rank.

	\begin{itemize}
		\item  As discussed in \cref{sec:gr}, while the work of Koiran \cite{646091} gives bounds on the computational complexity of computing the dimension of general algebraic varieties, it remains open to determine the computational complexity of computing the geometric rank (i.e.~computing the dimension of a \emph{bilinear} variety).
		\item In \cref{stropt} we determined the precise value of the \emph{border} subrank $\bordersubrank$ of the matrix multiplication tensors. It remains open to determine the subrank $\subrank$ of the matrix multiplication tensors. 
		\item Related to the previous point it is natural to ask what can generally be said about the relation between (border) subrank and geometric rank.
		\item In \cref{theo:liminf} we have shown that $\liminf_p \AR(T_p) = \GR(T)$ for any tensor $T$ over $\ZZ$, where~$T_p$ denotes the tensor over $\FF_p$ obtained by reducing all coefficients of $T$ modulo $p$.
		What is the largest gap between $\liminf_p \AR(T_p)$ and $\limsup_p \AR(T_p)$?
		\item Related to the previous point, can \cref{theo:liminf} be extended to tensors $T$ defined over finite fields rather than over $\ZZ$?
		Indeed, one can extend the definition of analytic rank to non-prime finite fields, by evaluating the bias of a character applied to the tensor. Does an analogous result to \cref{theo:liminf} hold there as well?
	\end{itemize}


\section*{Acknowledgments} 
We would like to thank Avi Wigderson for helpful conversations, and the anonymous reviewers for their careful reading and comments.





\raggedright
\bibliographystyle{alphaurl} 
\bibliography{all}


\begin{dajauthors}
\begin{authorinfo}[kopp]
  Swastik Kopparty\\
  Associate Professor\\
  University of Toronto\\
  Toronto, Canada\\
  swastik\imagedot{}kopparty\imageat{}utoronto\imagedot{}ca \\
  \url{https://www.math.toronto.edu/swastik/}
\end{authorinfo}
\begin{authorinfo}[moshk]
  Guy Moshkovitz\\
  Assistant Professor\\
  City University of New York (Baruch College and the Graduate Center)\\
  New York City, USA\\
  guymoshkov\imageat{}gmail\imagedot{}com \\
  \url{https://sites.google.com/view/guy-moshkovitz/}
\end{authorinfo}
\begin{authorinfo}[zuid]
  Jeroen Zuiddam\\
  Assistant Professor\\
  University of Amsterdam\\
  Amsterdam, Netherlands\\
  j\imagedot{}zuiddam\imageat{}uva\imagedot{}nl\\
  \url{https://staff.fnwi.uva.nl/j.zuiddam/}
\end{authorinfo}
\end{dajauthors}

\end{document}